\newcolumntype{C}[1]{>{\centering\arraybackslash}m{#1}} 
\newtheorem{definition}{Definition}
\newtheorem{theorem}{Theorem}
\newtheorem{lemma}{Lemma}
\newcommand{\TMD}{\mathrm{TMD}}
\newcommand{\PH}{\mathrm{PH}}
\newcommand{\BPH}{\mathrm{EPH}_0}
\pgfplotsset{every axis/.append style={
    axis x line=middle,    
    axis y line=middle,    
    axis line style={<->}, 
    xlabel={$x$},       
    ylabel={$y$},       
    },
    cmhplot/.style={color=red,mark=none,line width=1pt,<->},
    soldot/.style={color=red,only marks,mark=*},
    holdot/.style={color=red,fill=white,only marks,mark=*},
}
\begin{document}

\begin{frontmatter}
\title{Barcodes distinguish morphology of neuronal tauopathy}

\author[add1]{David Beers}
\ead{david.beers@maths.ox.ac.uk}
\author[add2]{Despoina Goniotaki}
\author[add2]{Diane P. Hanger}
\author[add1]{Alain Goriely}
\author[add1]{Heather A. Harrington}

\address[add1]{Mathematical Institute, University of Oxford}
\address[add2]{Institute of Psychiatry, Psychology, \& Neuroscience, King's College London}

\begin{abstract}
The geometry of neurons is known to be important for their functions. Hence, neurons are often classified by their morphology. Two recent methods, persistent homology and the topological morphology descriptor, assign a morphology descriptor called a barcode to a neuron equipped with a given function, such as the Euclidean distance from the root of the neuron.  These barcodes can be converted into matrices called persistence images, which can then be averaged across groups. We show that when the defining function is the path length from the root, both the topological morphology descriptor and persistent homology are equivalent. We further show that persistence images arising from the path length  procedure provide an interpretable summary of neuronal morphology. We introduce {topological morphology functions}, a class of functions similar to Sholl functions, that can be recovered from the associated topological morphology descriptor. To demonstrate  this topological approach, we compare healthy cortical and hippocampal mouse neurons to those affected by progressive tauopathy. We find a significant difference in the morphology of healthy neurons and those with a tauopathy at a postsymptomatic age. We use persistence images to conclude that the diseased group tends to have neurons with shorter branches as well as fewer branches far from the soma.
\end{abstract}
\end{frontmatter}

\section{Introduction}
Neurons are essential for cognitive function, and their activity is dependent on their morphology \cite{london2005dendritic}. For example, when two simultaneous signals reach a neuron's dendrites, their mutual distance affects the combined signal that reaches the soma. In particular, nearby signals have a weaker total affect than those that are far apart \cite{rall1967dendritic}. Similarly, the length of the dendrites is known to affect the signal received by the soma, with signals that have traveled a long distance along the dendrites being weaker and more spread out than those that travel a short distance to the soma \cite{fetz1983relation}. To further confound the study of neurons, it is known that neuronal morphology- even within the same animal \cite{grueber2002tiling}- can be highly heterogeneous, suggesting that different neurons have morphologies suited for different functions.

These observations have naturally lead to attempts to classify large sets of neurons via their geometry. Computationally, neurons are typically represented by trees, collections of vertices in the two or three dimensional Euclidean space connected by linear edges, with one such point denoting the location of the soma. Comparison of large classes of neurons can then be achieved by extracting numerical morphological descriptors, such as the number of bifurcations and branching angles, from these reconstructions \cite{laturnus2020systematic}. However, these individual numerical descriptors cannot describe neuronal morphology in detail (e.g., the number of bifurcations cannot recover a neuron's tortuosity).
Another approach consists in computing density maps that capture the distribution of neurites in a neuron \cite{jefferis2007comprehensive}.
If we consider two classes of neurons with different density maps, it is difficult to determine the neuronal features that correspond to differences in these density maps as well as accurately measure the difference between two density maps (i.e., minimize the integral of their difference over all possible rotations and translations).

A classical morphological descriptor is the \textit{Sholl function} of a neuron \cite{sholl1953dendritic}, which assigns for every positive number~$r$ the number~$s(r)$ of times a spherical shell of radius~$r$ intersects a given neuron. The Sholl function can be interpreted visually when plotted as a function of $r$. By taking averages of Sholl functions, researchers are able to visualize the average structure of large classes of neurons.

 Given a neuron, represented mathematically as a rooted tree, and a function on its nodes,  taken to be the Euclidean distance to the soma, two groups  proposed independently the use of barcodes to compare morphologies \cite{kanari2018topological,li2017metrics}. Barcodes are mathematical objects used in topological data analysis as multiscale morphology descriptors.
For instance, Li et al. \cite{li2017metrics} computed a barcode of neuronal data by applying  \textit{persistent homology} (PH) \cite{carlsson2009topology,edelsbrunner2010computational,ghrist2008barcodes}, a technique in computational mathematics for extracting multi-scale topological features from data. Technically, they compute zero dimensional extended persistent homology, replacing an infinite feature with the maximum value of the function, to obtain a barcode of bounded intervals for a neuron.
Li et al also show that the Sholl function of a neuron can be computed from the union of two particular barcodes. In a separate work, Kanari et al. \cite{kanari2018topological} developed an algorithm that takes a tree and function as input, and calculates a PH inspired barcode, called a \textit{topological morphology descriptor }(TMD). The TMDs of neurons are then converted to matrices called \textit{unweighted persistence images} \cite{adams2017persistence}, to compare different classes of neurons. These unweighted persistence images are visualized as two dimensional images with the intensity of the $(i,j)^{\mathrm{th}}$ pixel representing the magnitude of the corresponding $(i,j)^{\mathrm{th}}$ matrix entry.

Here, we build on the work of both groups. We compute TMDs, but we use  the path distance, or \textit{intrinsic distance} to the soma rather than  the radial (Euclidean) distance from the soma as done in  \cite{kanari2018topological}. The intrinsic distance to the soma has already been suggested as a relevant choice of function \cite{kanari2018topological,li2017metrics} and briefly demonstrated on an example in \cite[SI]{kanari2018topological}; however, to the authors' knowledge it has not been thoroughly investigated for neuronal morphology.

\subsection{Contributions}

We show that different regions in the persistence images of TMDs generated with the path distance to the soma function can be interpreted to understand the morphology of large groups of neurons. Specifically, we discuss how weighted persistence image records information about neuron branches, including their length within a population of neurons as well as their distance from the soma. Based on this framework we have the following contributions.

\begin{itemize}
    \item We define a new class of morphology descriptors, the \textit{topological morphology functions}, and show that they can be approximated from persistence images via numerical integration.
    \item We prove that for a class of functions containing the path distance to the soma 
    the barcodes of \cite{kanari2018topological} and \cite{li2017metrics} are equivalent (Theorem \ref{thm:coresp}). 
    \item We apply these techniques to neurons from control mice and mice that model tauopathies, a form of neurodegenerative disorders in which tau protein forms deposits in the brain. This topological framework finds that neurons in mice with a postsymptomatic tauopathy are on average shorter than controls; furthermore, these neurons exhibit less branching far from the soma than controls
\end{itemize}

The remainder of the paper is organized as follows. In section \ref{sec:theory} we recall the how one constructs the TMD from a neuron equipped with a function, and how to represent a TMD as a persistence image. Section \ref{sec:thrres} describes the morphological information recorded by the TMD and its associated persistence images when a neuron is equipped with the path distance from the soma function. Further, this section exposes how the definition of topological morphology functions arises naturally from these TMDs, and shows how topological morphology functions may be approximately recovered from persistence images. Sections \ref{sec:methods} and \ref{sec:expres} consist of an application of TMDs to study diseased mice. In the appendix we define the barcodes of \cite{li2017metrics}, which are given by persistent homology. Here we also define some technical metrics on the space of persistence diagrams and prove theoretical statements made in Section \ref{sec:thrres}.

\section{Theory}
\label{sec:theory}
\subsection{The topological morphology descriptor (TMD)}
Consider a neuron be represented by a tree $T$ with nodes $N(T)$, and any function $f:N(T) \to [0,\infty)$ which returns a measure of distance from the soma. The \textit{topological morphology descriptor} (TMD) of this neuron equipped with the function $f$ is a collection of intervals in the real line that describes the branching pattern of the dendrites with respect to $f$. To understand how the TMD is computed, we first introduce basic definitions regarding the tree $T$. In this paper, we consider only \textit{finite trees}, i.e. trees with finitely many nodes. We also assume that all trees are equipped with embeddings in $\mathbb{R}^2$ or $\mathbb{R}^3$, sending edges to line segments, unless stated otherwise. Computationally, a neuron is represented by a tree $T$, with nodes associated to locations in $\mathbb{R}^3$, and with a distinguished node~$r$ that represents the soma, as shown in Figure \ref{fig:barcode} (a). We refer to the distinguished node as the \textit{root} of $T$. From the root we induce an orientation of the edges of $T$. We orient any edge $e$ incident to the $r$ away from~$r$. Inductively, if $v$ is incident to an edge we have already oriented, we orient any edges incident to $v$ and $v'$, but not yet oriented, away from $v$ towards $v'$. If, for a given $v$ and $v'$, in the resulting directed graph there is a directed edge $e$ from $v$ to $v'$, denoted $(v,v')$, we say that $v$ is the \textit{parent} of $v'$ and $v'$ is a \textit{child} of $v$. If a vertex $v$ has three or more incident edges, we say that $v$ is a \textit{branch point}. If the root $r$ has two or more incident edges then we say that $r$ is a branch point. Similarly, if a vertex $v \neq r$ has exactly one incident edge, we say that $v$ is a \textit{leaf} of $T$. Suppose that $v'$ is a child of $v$, i.e. there is a directed edge $e = (v,v')$. We can associate to $v'$ the induced subtree $T'$ generated by $v'$, its children, its children's children, and so on. We say that the union $T'\cup \{e\}$ is a \textit{child branch} of $v$. Examples of these definitions are shown in Figure~\ref{fig:extree}. In \cite{kanari2018topological} the function~$f$ on~$T$ is chosen to be the Euclidean distance radial distance from the root $r$.

\begin{figure}[htbp]
\centering
\resizebox{0.6\textwidth}{!}{
\includegraphics{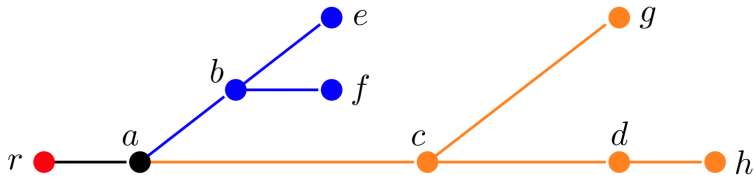}
}

\caption{A tree with root $r$. Nodes $a$, $b$ and $c$ are branch points, while nodes $e$, $f$, $g$, and $h$ are leaves. Nodes $r$ and $d$ are neither. The two child branches of $a$ are colored in blue and orange.}
\label{fig:extree}
\end{figure}

The algorithm of \cite{kanari2018topological} yields a local-to-global topological summary of the morphology of $T$ using the function $f$ as follows. At each branch point $b$ in a tree $T$ we inspect each child branch of $b$. We leave untouched the child branch that attains the greatest value of the function $f$ on its leaves, and detach every other child branch of the branch point. It may be the case that at a given branch point, two or more different child branches attain the greatest value of the function $f$ on their leaves. In such a case, we select, arbitrarily, one child branch to keep connected to the tree and we detach the other branches. This choice has no effect on the output of the algorithm. We then iterate over all branch points. Once this process has been completed at each branch point, what will remain is a collection of intervals, with endpoints given by their associated $f$ values, which we denote by $\TMD(T,f)$. It may be the case that some interval appears more than one time in $\TMD(T,f)$. When this happens we record the number of times that the interval appears. We show this process applied to an example neuron in Figure~\ref{fig:barcode}. Kanari et al \cite{kanari2018topological} choose a specific order in which their algorithm inspects the branch points; however, the chosen order does not change the resulting TMD, since applying the detaching procedure at one branch point has no effect on neither the number of child branches any other branch point has, nor the maximal leaf values of $f$ on these child branches. In general, a multiset of intervals with endpoints indexed by real numbers, such as $\TMD(T,f)$, is referred to as a \textit{barcode}. We summarize the procedure of generating $\TMD(T,f)$ in the following steps:
\begin{enumerate}
\item Choose a branch point, and identify a child branch of that branch point that attains the greatest value of $f$ on its leaves.
\item Detach every child branch from this branch point except for the identified branch.
\item If there is branch point in the resulting collection of trees, return to step~1.
\item Label the endpoints of the resulting collection intervals with their associated $f$ values.
\end{enumerate}

\begin{figure}[htbp]
\centering
\resizebox{\textwidth}{!}{
\includegraphics{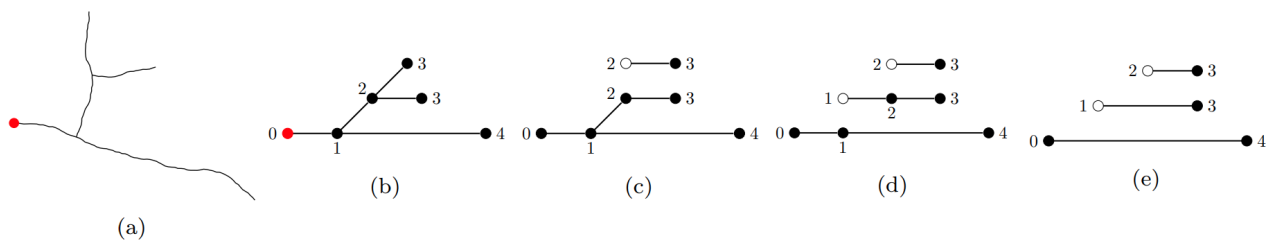}
}

\caption{Retrieving the barcode from a neuron. In (a) is a simple computer generated neuron, with its soma identified by a red node. Figure (b) depicts the neuron represented as a tree with its root colored red. In (b) we also label each node by the values of a function $f$. Panels (b) through (e) show the process of extracting a barcode from a rooted tree with labelled edges. In the transition from (b) to (c), we remove a branch at the branch point indicated by the number 2. Here, it does not matter which branch we remove since both branches attain the same maximal value of 3 on their leaves. In the transition between (c) and (d) we remove a branch at the branch point labelled 1. This time we do not have any choice of which branch to remove: one branch has a leaf with a value of 4, so we must remove the branch that only has a leaf with a value of 3. Panel (e) shows the barcode we acquire after this process. Each interval corresponds to a branch in the original neuron, with its left endpoint given the $f$ value at the branch point where the branch initiates, and the right endpoint given the $f$ value at the leaf where the same branch terminates.}
\label{fig:barcode}
\end{figure}

Notably, there is a bijective correspondence between right endpoints of intervals and leaves of $T$. Hence the right endpoints must be closed. All but one of the left endpoints of $\TMD(T,f)$ are constructed by detaching a child branch from a branch point, and so must be open. The only interval with a closed left endpoint must be $[f(r),L]$, with $L$ the maximum value of $f$ on the leaves. To see this, consider the unique path along directed edges from $r$ to a leaf $l$ with $f(l) = L$. At each branch point on this path we can choose not to detach the child branch containing $l$ during the TMD algorithm. Therefore the interval $[f(r),L]$ will be one of the remaining intervals after the TMD algorithm terminates. For an arbitrary function $f$, the intervals $[x,y]$ or $(x,y]$ of the $\TMD(T,f)$ are not always technically intervals in the real line, since it can be the case that $x>y$. However, outside of the appendix we will restrict our attention to functions $f$ such that the intervals have the property $x<y$.

\subsection{Methods for analyzing TMDs}
A barcode can be visualized as multisets of intervals (see Figure \ref{fig:barcode}).
Barcodes can also be visualized as multisets of points in the real plane by sending each interval $[x,y]$ or $(x,y]$ to the point $(x,y)$. We call this multiset a
\textit{persistence diagram}. 
We remark that both barcodes and persistence diagrams record the \textit{multiplicities} of points in the plane (i.e., the number of intervals with the same endpoints), and these multiplicities can be visualized (see e.g., \cite[Figure 2]{hiraoka2016hierarchical}). To be able to compute distances between persistence images, every point $(x,x)$ is included in every persistence diagram with infinite multiplicity, and are typically either not depicted visually or represented by a diagonal line. The bottleneck distance and the $q$-Wasserstein distances (both defined in the appendix) are two frequently used distance functions between persistence diagrams which make use of these diagonal points. However for our purposes the diagonal points are not morphologically relevant. In Figure \ref{fig:persim}, we show the example neuron from Figure $\ref{fig:barcode}$, its barcode, and its persistence diagram in panels (a), (b), and (c) respectively. The height of each point $(x,y)$ above the line $y=x$, dashed in Figure \ref{fig:persim} (c), is $y-x$, which is also the difference in $f$ values of the endpoints of the corresponding interval in the barcode. By convention, we refer to $x$ as the \textit{birth} of the interval, $y$ as the \textit{death} of the interval, and the length $y-x$ as its \textit{persistence}. The persistence diagram of $\TMD(T,f)$ is known to be stable (with respect to the bottleneck distance) to perturbations of $f$, addition of short branches to $T$, and removal of short branches from $T$ \cite[SI,~Theorem~1]{kanari2018topological}.

Since the persistence of an interval is an important quantity, a natural transformation to apply is
\begin{equation}
(x,y) \mapsto (x, y-x).
\end{equation}
In Figure \ref{fig:persim} (d) we show the above transform applied to the persistence diagram in Figure \ref{fig:persim} (c). This transformation leaves the birth value of each point unchanged, but replaces the death value of a point with its persistence. Note that this transformation is invertible with inverse $(x,y) \mapsto (x,x+y)$, and so does not lose any information from the persistence diagram.

\begin{figure}[htbp]
\centering
\resizebox{\textwidth}{!}{
\includegraphics{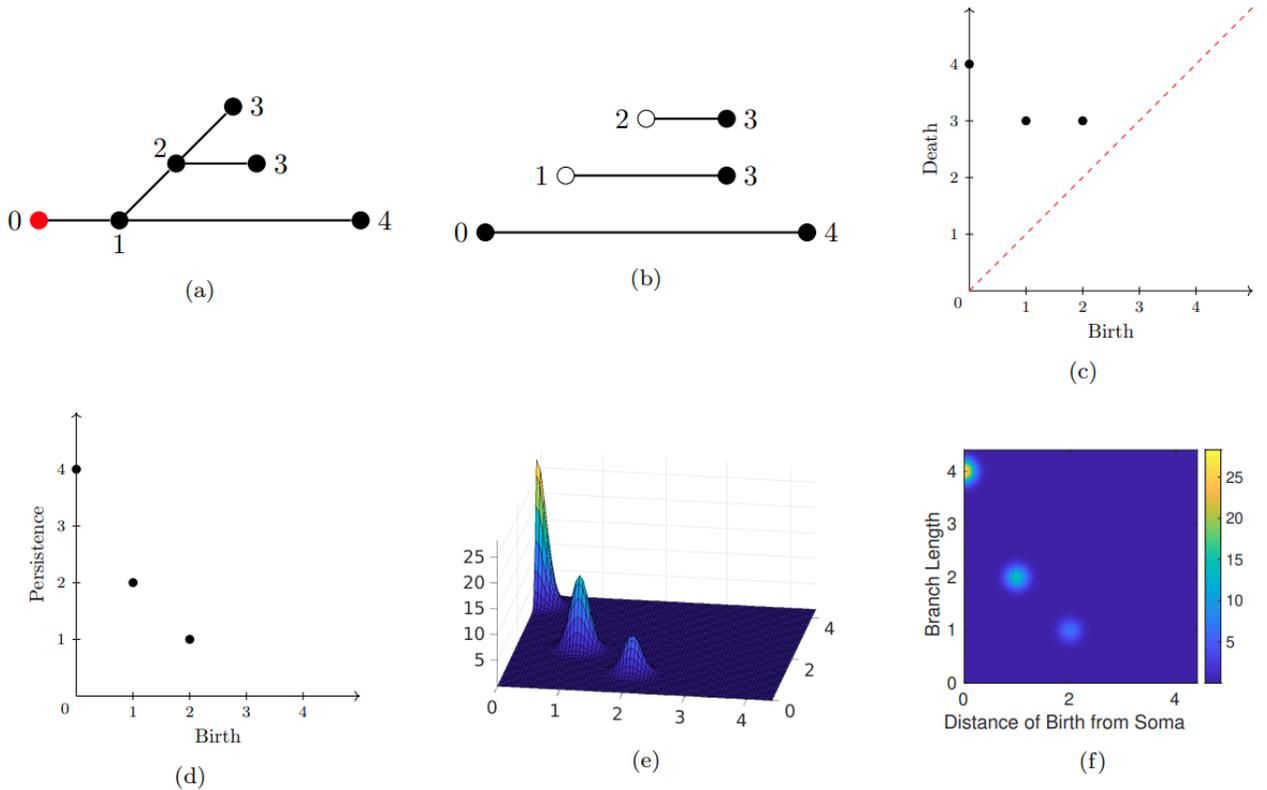}
}

\caption{Turning a tree into a persistence image. Panel (a) shows an example tree with nodes labelled by their path length distance to the root, which is colored red. Panels (b) and (c) show the barcode diagram and persistence diagram of this neuron respectively. Panel (d) shows the resulting transformed persistence diagram. Lastly, panels (d) and (e) show the resulting persistence surface and persistence image respectively. Note that the points with greater $y$ values are weighted more heavily in the construction of panels (e) and (f).}
\label{fig:persim}
\end{figure}

While barcodes, diagrams, and their transformed counterparts can be convenient to visualize, they are not suitable for averages and statistical comparison.
One solution proposed by Adams et al \cite{adams2017persistence}
is to define a two-dimensional function that is the sum of Gaussian functions of a fixed standard deviation $\sigma$ centered at each point in the transformed (birth, persistence) diagram. Each Gaussian function involved in this sum is given a weight equal to the height above the $x$-axis of the corresponding point. Such a function associated to a barcode is called a \textit{persistence surface} \cite{adams2017persistence}. Figure~\ref{fig:persim}(e) shows the persistence surface for the example neuron. As a result of the choice of weighting, the area under each such function is proportional to the sum of the lengths of intervals in $\TMD(T,f)$. In practice, persistence surfaces are often simplified by considering the value of the persistence surface on only a finite grid of values. Such a simplification is referred to as a \textit{persistence image}\footnote{Weighting the Gaussian functions is standard practice when generating a persistence image. As the authors of \cite{kanari2018topological} skip the weighting step, they refer to persistence images of the TMDs they study as unweighted persistence images. The authors also do not transform their data prior to generating persistence images, since intervals can have negative persistence in an arbitrary TMD. Since we will restrict to barcodes with intervals of only positive persistence in the methods section, we transform the persistence diagram first, which is also standard.} \cite{adams2017persistence}. Computationally, persistence images can be represented by matrices, with the $(i,j)^{\textrm{th}}$ entry being the value of the persistence surface at row $i$ and column $j$ of the chosen grid. Most commonly, persistence images are displayed as heat maps, as in Figure \ref{fig:persim} (f). An advantage of analyzing persistence images instead of barcodes or diagrams is the ability to efficiently compute differences and averages across of sets of neurons. 

While the distance of persistence images is not stable with respect to the bottleneck distance on the space of barcodes, it is stable with respect to the 1-Wasserstein distance \cite[Theorem~4]{adams2017persistence} thanks to the weights applied to the Gaussian functions.

\section{Theoretical Results}
\label{sec:thrres}
\subsection{Interpretation}

We restrict our attention to barcodes given by $\TMD(T,d)$, where $d(v)$ is the intrinsic distance from $r$ to $v$ in $T$ as a subset of $\mathbb{R}^3$. In other words, the function $d$ returns the length of the path along the neuron from the point given by $v$ to the soma. For example, $d$ is used to obtain the TMD in Figure \ref{fig:barcode}. We show that the barcode calculated using the methodology of Li et al. \cite{li2017metrics} is equivalent to $\TMD(T,d)$ (see Theorem \ref{thm:coresp}, appendix). Since the TMD algorithm decomposes $T$ into a collection of branches, the sum of the lengths of all the intervals gives the total branch length of the neuron. Further, the only closed interval in $\TMD(T,d)$ will be $[0,L]$, where $L$ is the greatest value of $d$, since $d$ attains all of its local maxima on leaves and $d(r) = 0$. All other intervals $(x,y]$ represent a branch in the barcode decomposition of $T$ that initiates at an intrinsic distance $x$ from the root and terminates at an intrinsic distance of $y$ from the root. Therefore, the persistence of each interval in $\TMD(T,d)$ is exactly the length of the branch that the interval represents, which must be positive. Hence our choice of weights ensures the area under the persistence surface of $\TMD(T,d)$ is exactly the cumulative branch length of~$T$.

A persistence surface of $\TMD(T,d)$ can be analyzed and interpreted back to its associated neuron in other ways as well.
By identifying regions in the $xy$-plane where persistence surface $F(x,y)$ is large, we can read off the types of branches in $\TMD(T,f)$ that contribute to the total branch length. Indeed, 
regions where $F$ is large with large $y$ values correspond to longer branches, while regions where $F$ is large with small $y$ values correspond to the contribution of shorter branches.
Similarly, regions where $F$ is large with large $x$ values correspond to branches that initiate close to the root. By contrast, regions where $F$ is large with small $x$ values correspond to branches that initiate far from the root, in the sense that the path from the root to the branch point at which they initiate is long.
Note that the persistence image will always have a Gaussian contribution of weight $L$ centered at $(0,L)$, since $\TMD(T,d)$ must contain the interval $[0,L]$. Since persistence images are discrete approximations of persistence surfaces, they inherit similar interpretations. These concepts are illustrated with artificial neurons, each with the same total branch length, in Figure \ref{fig:toy_barcode}. In Figure \ref{fig:drospersim} we show how persistence images can summarize the traits of real morphologically distinct neurons.
As mentioned, an advantage of persistence images over persistence diagrams is that it is straightforward to compute and visualize the average persistence image of a collection of neurons. From such a persistence image we can then easily calculate the average total branch length, which is given by the discrete approximation of the integral over the

persistence image. We can also identify the types of branches that tend to appear in an entire class of neurons.

\begin{figure}[htbp]
\centering
\resizebox{\textwidth}{!}{
\includegraphics{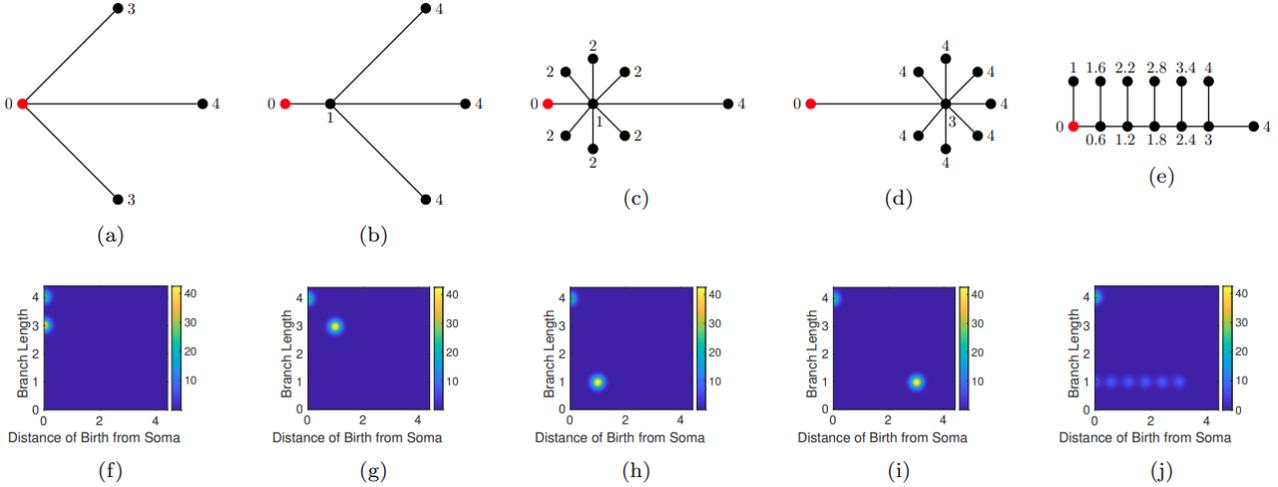}
}

\caption{ Persistence images of toy neuronal trees. Top row: Five toy neuronal trees with roots indicated by a red vertex. Bottom row: Persistence images corresponding to the $\TMD(T,d)$ of the toy neuronal tree in top row.  All persistence images of these trees have a feature centered at $(0,4)$, indicating that the furthest path length from the root (soma) of the toy neurons is 4 units. The remaining pixels with positive values correspond to the remaining branches of each toy neuron. (a) There are two additional branches of total length 6 that originate at the soma, which correspond to a feature centered at the coordinates (0,3) in (f). The relative pixel intensities reflect that these branches correspond to a total length of 6 while the longest branch only corresponds to a total length of 4.  (b) Two shorter branches are a distance of 1 unit from the soma, which corresponds to a birth value of 1 and branch length 3 in (g). (c) Six branches of length 1 all initiate at a distance of 1 from the soma, which corresponds to birth and persistence value of 1. In (d) and (i), six branches of length 1 that initiate at distance 3 from the root causes an increase in pixel intensity near the coordinates (3,1). (e) and (j), Six branches of length 1 initiate at different distances from the root which correspond to positive pixels along the line with a $y$ value of one, however, since they initiate at different distances from the root, the pixel intensity is dispersed across birth values.}

\label{fig:toy_barcode}
\end{figure}

\begin{figure}[htbp]
\centering
\resizebox{\textwidth}{!}{
\includegraphics{Figure_5}
}

\caption{Persistence images of \textit{Drosophila} neurons. (a) A class I neuron characterized by few main branches and many long secondary branches growing perpendicularly to the main branch. (b) An area-covering class IV sensory neuron with many small branches far from the root. In both (a) and (b) the soma is indicated by a red node. Below each in (c) and (d) are the corresponding persistence images.
There are several bright regions substantially above the $x$-axis in (c) resulting from the many long secondary branches in (a). The brightest region of (d) is far to the right and close to the $x$-axis since many of the branches of (b) are far from the root and short. The digital reconstructions of these neurons from \cite{sulkowski2011turtle} are freely available at NeuroMorpho.org \cite{ascoli2006mobilizing}. These neurons are named 02-16-09-Class1-B40X and 02-16-09-ClassIV on the site respectively.}
\label{fig:drospersim}
\end{figure}

\subsection{Proposing topological morphology functions}
In \cite{li2017metrics}, the authors show that a neuron's Sholl function can be reconstructed from a barcode obtained via an alternate methodology. We will show that a similar function can be recovered from the barcode $\TMD(T,d)$. With the function $d$ we lose information about the radial distance of branches from the soma necessary for the construction of Sholl functions. However, we can construct another family of functions which similarly describe the branching morphology of a given neuron. Explicitly, the \textit{topological morphology function} $p$ of a neuron associates to each positive number $t$ the number of points on the neuron which have an intrinsic distance of $t$ to the soma. The value $p(t)$ is the number of intervals in $\TMD(T,d)$ containing $t$, since this is the number of branches in the TMD decomposition of $T$ containing points a distance of $t$ from the soma. We call a number $t$ \textit{generic} if it is not an endpoint of an interval in $\TMD(T,d)$. For generic $t$, the number $p(t)$ is easily obtained from both persistence diagrams and transformed persistence diagrams, as shown in Figure~\ref{fig:trianglecount}. In a persistence diagram, $p(t)$ for a generic $t$ is the number of points above and to the left of $(t,t)$. In a transformed persistence diagram, the same value is the number of points in the region $R_t$, given by the equations
\begin{equation}
x < t, \quad y > t-x.
\end{equation}
The reason why we can only deduce $p(t)$ from persistence diagrams for generic $t$ is that persistence diagrams do not retain information about the openness or closedness of the intervals in their corresponding barcode.

\begin{figure}[htbp]
\centering
\resizebox{.8\textwidth}{!}{
\includegraphics{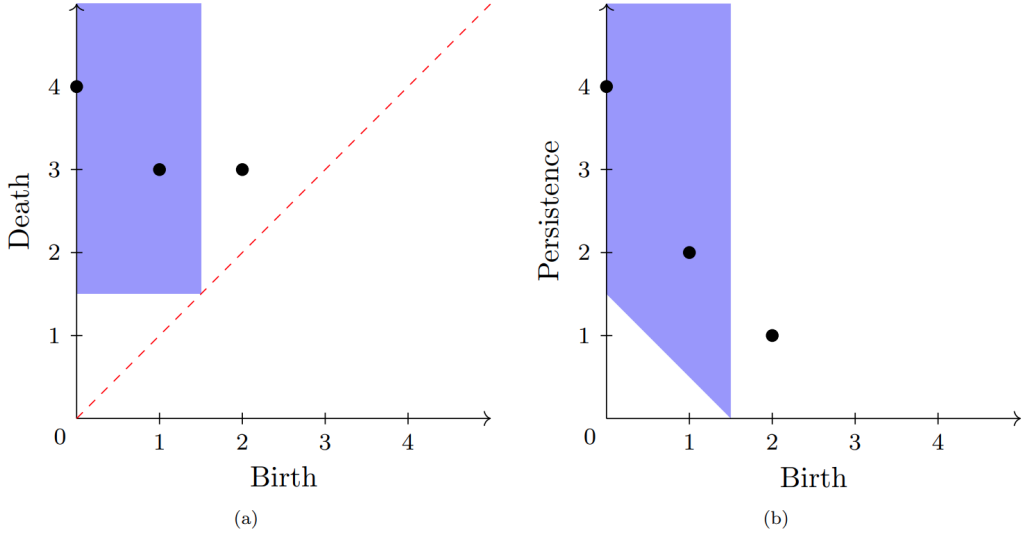}
}

\caption{Recovering $p(1.5)$ from a standard and transformed persistence diagram. On the left, $p(1.5)$ is recovered from the persistence diagram by counting the number of points above and to the left of the point~$(1.5,\,1.5)$. This region is shaded blue. On the right, $p(1.5)$ is recovered from the corresponding transformed persistence diagram by counting the number of points in $R_t$, shaded blue. In both cases, the number of points in the blue region is 2, so $p(1.5) = 2$.}
\label{fig:trianglecount}
\end{figure}

\noindent\begin{minipage}{\textwidth}
The following theorem shows that the topological morphology function $p$ of a neuron represented by $T$ can be partially reconstructed from the persistence surface of $\TMD(T,d)$ via integration.

\begin{restatable}{theorem}{limfunc}
Let $p$ be the topological morphology function associated to a neuron represented by a tree $T$. Let $d:N(T) \to [0,\infty)$ be the intrinsic distance to the soma. Let $F_\sigma(x,y)$ be the persistence surface corresponding to $\TMD(T,d)$ constructed with Gaussian functions of standard deviation $\sigma$. For any generic positive number $t$,

\begin{equation}
p(t) = \lim_{\sigma \to 0} \int_{R_t}\frac{F_\sigma(x,y)}{y}\,dx\,dy.
\label{eqn:limfunc}
\end{equation}
Further, the integral in the above expression converges and is an infinitely differentiable function of $t$ for all $\sigma > 0$.
\label{thm:limfunc}
\end{restatable}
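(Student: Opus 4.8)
The plan is to decompose the persistence surface into its constituent weighted Gaussians, exchange the integral over $R_t$ with the finite sum, and analyze each term separately. Write $\TMD(T,d) = \{(x_k, y_k)\}_{k=1}^n$ (with multiplicity), so after the $(x,y) \mapsto (x, y-x)$ transform the surface is $F_\sigma(x,y) = \sum_{k=1}^n (y_k - x_k)\, g_\sigma(x - x_k,\, y - (y_k - x_k))$, where $g_\sigma$ is the standard bivariate Gaussian density with covariance $\sigma^2 I$ and $y_k - x_k > 0$ is the persistence (positive by the restriction to functions with $x<y$, as established before the theorem). The weight is exactly $y_k - x_k$, the height above the $x$-axis of the transformed point, so dividing by $y$ in the integrand and integrating over $R_t = \{x < t,\ y > t - x\}$ gives
\begin{equation}
\int_{R_t} \frac{F_\sigma(x,y)}{y}\,dx\,dy = \sum_{k=1}^n (y_k - x_k) \int_{R_t} \frac{g_\sigma\big(x - x_k,\, y - (y_k - x_k)\big)}{y}\,dx\,dy.
\end{equation}

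First I would settle convergence and smoothness in $t$ for fixed $\sigma > 0$. The only potential issue is the $1/y$ factor near $y = 0$; but $R_t$ requires $y > t - x$ and we may split $R_t$ into the part with $y$ bounded below by a positive constant and the part near the line $y = t-x$. On the latter, one changes variables so that the singularity $1/y$ is integrated against a smooth rapidly-decaying Gaussian; since $\int_0^\epsilon y^{-1}\,dy$ diverges only logarithmically and the Gaussian kills the $x \to \pm\infty$ tails, a direct estimate (or noting $\log$-integrability against a bounded density in a neighborhood of a line) gives absolute convergence. For differentiability in $t$, I would write the integral over $R_t$ using indicator functions, differentiate under the integral sign in $t$: the $t$-dependence enters only through the region $R_t$, whose boundary consists of the vertical line $x = t$ and the line $y = t - x$; differentiating picks up boundary integrals of $g_\sigma/y$ along these lines, which are themselves smooth in $t$ because $g_\sigma$ is smooth and, along $x = t$, we have $y > 0$ bounded away from $0$, while along $y = t-x$ the contribution has a $1/(t-x)$ weight that is again only log-singular and integrates against a Schwartz function. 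Iterating gives $C^\infty$ dependence on $t$.

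The main step—and the expected obstacle—is the $\sigma \to 0$ limit. As $\sigma \to 0$, the Gaussian $g_\sigma(\cdot - a, \cdot - b)$ converges weakly to the Dirac mass $\delta_{(a,b)}$ at $(a,b) = (x_k,\, y_k - x_k)$. If that point lies in the open region $R_t$, the $k$-th term converges to $(y_k - x_k) \cdot \frac{1}{y_k - x_k} = 1$; if it lies outside the closed region, the term converges to $0$. So formally the sum converges to the number of $k$ with $(x_k, y_k - x_k) \in R_t$, i.e. the number of transformed points with $x_k < t$ and $y_k - x_k > t - x_k$, equivalently $x_k < t < y_k$—which is precisely the number of intervals of $\TMD(T,d)$ containing $t$, namely $p(t)$, by the identification established just before the theorem. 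The genericity hypothesis is exactly what guarantees no transformed point sits on $\partial R_t$ (the lines $x = t$ or $x + y = t$ correspond to $t = x_k$ or $t = y_k$), so no term has an ambiguous limit. To make this rigorous I would (i) for each $k$ with the center strictly inside $R_t$, pick a small ball around the center contained in $R_t$ on which $1/y$ is continuous and bounded, use that $\int_{\text{ball}} g_\sigma \to 1$ while the tail mass outside the ball is $O(e^{-c/\sigma^2})$ and the $1/y$ factor on $R_t \setminus \text{ball}$ is dominated (again using only log-integrability near $y = t - x$ uniformly in small $\sigma$), so the term $\to 1$; (ii) for each $k$ with center outside $\overline{R_t}$, a fixed neighborhood of the center avoids $R_t$, so $g_\sigma$ restricted to $R_t$ has mass $O(e^{-c/\sigma^2})$, and multiplied by the bounded-in-the-relevant-sense weight $1/y$, the term $\to 0$. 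The one delicate uniformity is controlling $\int_{R_t} g_\sigma(x - x_k, y - b_k)\, y^{-1}\, dx\, dy$ near the singular line $y = t - x$ as $\sigma \to 0$: here I would change coordinates to $u = x$, $v = x + y - t$ so the line becomes $v = 0$, observe the integrand is $g_\sigma$ composed with an affine map times $1/(v + t - u)$, and bound it by pulling the Gaussian's $v$-marginal against $\int_0^\delta (v')^{-1}\, \mathrm{(Gaussian\ in\ }v'\mathrm{)}\, dv'$, which is $O(\sigma \log(1/\sigma))$ hence $\to 0$; summing the finitely many terms completes the proof.
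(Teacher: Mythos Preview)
Your overall strategy---decompose $F_\sigma$ as a finite weighted sum of Gaussians centred at the transformed diagram points, analyse each summand separately, and use genericity to ensure no centre lies on $\partial R_t$---is exactly the paper's approach. The paper packages the per-summand analysis into a lemma about $\int_{R_t} g_\sigma(x,y;\mu)/y\,dx\,dy$ and then observes the theorem is immediate from linearity. However, several of your technical estimates are incorrect as stated, and these are precisely the places where the real work lies.

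First, your convergence argument: the remark that ``$\int_0^\epsilon y^{-1}\,dy$ diverges only logarithmically'' does not help, since that one-dimensional integral is genuinely infinite. The actual reason $\int_{R_t} g_\sigma/y$ converges is geometric: the singular set $\{y=0\}$ meets $\overline{R_t}$ only at the corner $(t,0)$, and near that corner the region confines $x$ to the interval $(t-y,t)$ of width $y$, so the area integral gains a factor of $y$ that cancels the $1/y$. Concretely $\int_0^h\!\int_{t-y}^t \tfrac{M}{y}\,dx\,dy = Mh$. The paper's lemma uses exactly this estimate.

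Second, and more seriously, your differentiability argument breaks: along the boundary segment $x=t$ of $R_t$ the range of $y$ is $(0,\infty)$, \emph{not} bounded away from $0$ as you assert. Both boundary integrals you would obtain---$\int_0^\infty \tfrac{g_\sigma(t,y;\mu)}{y}\,dy$ along $x=t$ and $\int_{-\infty}^t \tfrac{g_\sigma(x,t-x;\mu)}{t-x}\,dx$ along $y=t-x$---diverge individually at the shared corner $(t,0)$. What is finite is their difference: writing $I(t)=\int_0^\infty\!\int_{t-y}^t \tfrac{g_\sigma}{y}\,dx\,dy$ and differentiating the inner integral gives
\[
\partial_t I(t)=\int_0^\infty \frac{g_\sigma(t,y;\mu)-g_\sigma(t-y,y;\mu)}{y}\,dy,
\]
whose numerator is $O(y)$ as $y\to 0$ by smoothness of $g_\sigma$, so the integrand is bounded near $y=0$. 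The paper carries this out, iterates for higher derivatives, and checks uniform convergence of the resulting integrals to justify the interchange.

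Your $\sigma\to 0$ limit is morally correct but inherits the same gap: to show the contribution over $R_t\setminus B_\delta(\mu)$ vanishes you again need the geometric cancellation above (with the sup $S(\sigma)$ of $g_\sigma$ on $B_\delta(\mu)^c$ in place of $M$), not merely a tail-mass bound multiplied by a possibly divergent weight. The paper does this and balances the two pieces by choosing $h=S(\sigma)^{-1/2}$.
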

\vspace{\parskip}
\end{minipage}

This theorem shows that we may use the persistence surface of a barcode to retrieve a smooth approximation of the topological morphology function, which improves in accuracy as the standard deviation~$\sigma$ used to generate the persistence surface approaches zero. We show in Figure \ref{fig:persfapprox} that this is true in when we approximate integration with the persistence image.
Note that if every branch of a neuron grows in a straight line exactly radially outward from the soma, then the topological morphology function of the neuron is equal to the neuron's Sholl function~$s$. Since this is often a reasonable approximation, we expect that $p$ and $s$ are similar in many cases. Under such an assumption regarding the direction of neurite outgrowths, this theorem shows that we can use a persistence image to approximate smoothly the Sholl function $s$.

We defer the proof of Theorem \ref{thm:limfunc} to the appendix, where we also prove an analogous result for the barcodes of \cite{li2017metrics}.

\begin{figure}[htbp]

\centering
\resizebox{0.5\textwidth}{!}{
\includegraphics{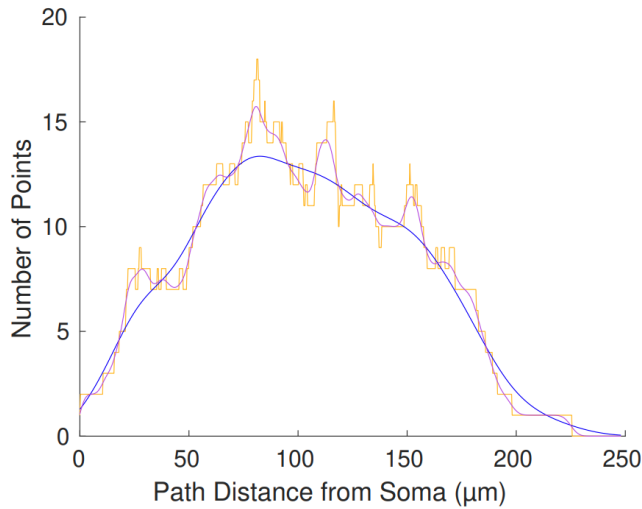}
}

\caption{The topological morphology function (orange) for the neuron in Figure \ref{fig:drospersim} (a) along with the approximate topological morphology function obtained by numerical integration of the persistence image for $\sigma = 10\mu m$ (blue) and $\sigma = 2 \mu m$ (purple).}
\label{fig:persfapprox}
\end{figure}

\section{Methods}
\label{sec:methods}
\subsection{Data}
 
We study Tau35 mice, a transgenic mouse line which has been used as a model to reproduce biological and cognitive features of human tauopathies \cite{bondulich2016tauopathy}. In particular, we  analyze topologically a set of 317 digitally reconstructed mouse neurons,  171 wild type (WT) and 146 Tau35 mouse variant, to study the effect of  tauopathy on morphology. These reconstructed neurons are further divided into groups by age (e.g., presymptomatic 4 month neurons or postsymptomatic 10-12 month neurons) and type (e.g., cortical or hippocampal neurons). All neurons in the dataset were grown in live mice and measured via a staining technique. Each reconstructed neuron features a collection of points representing the boundary of the soma. Since our methods do take into account the morphology of the soma, we preprocess the data by contracting these boundary points to a single point at their center of mass.

\subsection{Compuations}

We compute persistence images of $\TMD(T,d)$ associated to each neuron in the data set and take in-group averages. Each persistence image is generated from the persistence surfaces\footnote{For computations we generated persistence images with $231\times231$ entries. The entries of these persistence images evenly cover the region $[-0.15L_{\textrm{max}},L_{\textrm{max}}]^2\subseteq \mathbb{R}^2$, where $L_{\textrm{max}}$ is 1.1 times the maxiumim in-group branch length, across Tau35 and WT neurons. Note that the persistence images we use for compuation sample values outside the first quadrant, to assure that relevant information of points on the associated persistence diagram near the $x$ or $y$ axes is captured.} of Gaussian functions with a standard deviation of 10$\mu m$. We hypothesize that the difference between group averages of WT and Tau35 neurons of the same age and type is significant. To test this hypothesis, we compute the $L^1$ distance between average persistence images of the Tau35 and WT neurons in question. We then compare this $L^1$ distance to the $L^1$ distances for averages of 1000 randomized regroupings of the data.

\section{Results on tauopathy vs WT neurons}
\label{sec:expres}
 To test the hypothesis that WT neurons and tauopathy neurons are on average morphologically distinct, we average the persistence images of each group of neurons and perform a randomization test as explained in the Methods.
We show the results of our randomization tests in Table \ref{fig:randomizationtable}, where the rightmost column is the percent of times the randomized $L^1$ distance was less than or equal to the same distance between the average Tau35 and WT neurons. For each of the four randomization tests we choose to accept the between-average difference as significant if less than $5/4$ percent of the regroupings had a greater between-average distance. Under this criterion, the between-average difference of the persistence images of Tau35 and WT neurons was significant for the 10-12 month groups, but not significant for the 4 month groups. In Figure \ref{fig:wttaupersim}, we show the average persistence images of the Tau35 and WT neurons of each group with their in-group differences.

\begin{table}
\caption{The counts of neurons of different types in the dataset. The rightmost column denotes the percentage of random regroupings which had an $L^1$ distance of group averages less than the $L^1$ distances of the WT and Tau35 averages. }
\begin{center}
\begin{tabular}{ |c|c|c|c|c| } 
 \hline
 Age(Months)& Region & WT Count & Tau35 Count & Percent\\ 
\hline
\hline
 10-12  & Cortex & 65 & 43 & 99.2\\ 
\hline
 10-12  & Hippocampus & 38 & 49 & 99.7\\ 
\hline
 4  & Cortex & 39 & 33 & 98.6\\ 
\hline
 4 & Hippocampus & 29 & 21 & 94.9\\ 
 \hline
\end{tabular}
\end{center}

\label{fig:randomizationtable}
\end{table}

\begin{figure}[htbp]
\centering
\resizebox{\textwidth}{!}{
\includegraphics{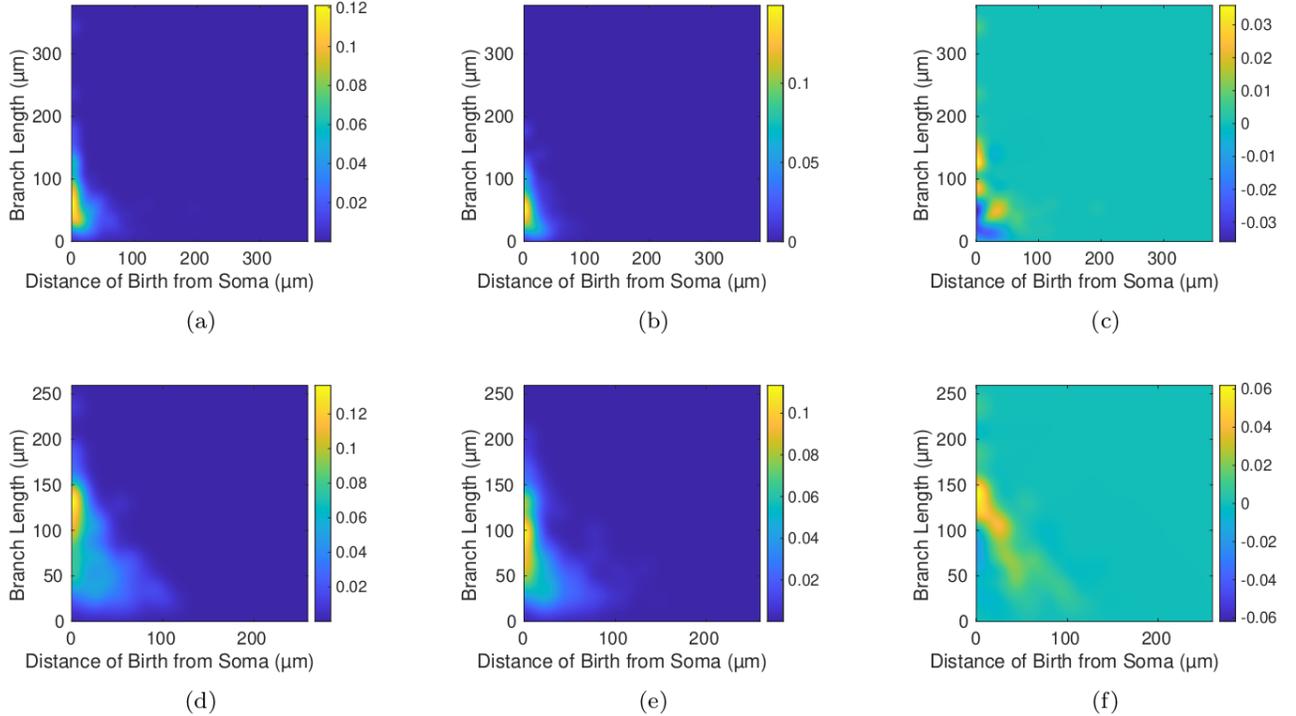}
}

\caption{In-group average persistence images and the difference between Tau and WT groups for 10-12 month cortical (first row) and hippocampal (second row) neurons. (First column) average WT persistence images, (second column) average Tau35 persistence images, (third column) difference between WT and Tau35 average persistence images.}
\label{fig:wttaupersim}
\end{figure}

We analyzed  in-group averages of persistence images to identify features which distinguish WT neurons from Tau35 neurons on average at 10-12 months. For both hippocampal and cortical neurons in this age range (Figure \ref{fig:wttaupersim} (c) and (f) respectively), we observe that the average Tau35 persistence images have a greater intensity towards the origin, while WT persistence images have greater pixel intensity for larger $x$ and $y$ values. From these observations we deduce that branches tend to not only be longer in WT neurons, but also often initiate further along the neuron than in Tau35 neurons.

In Figure \ref{fig:neurfuncs} we plot the associated Sholl and topological morphology functions of the Tau35 and WT 10-12 month cortical and hippocampal neurons. We observe from both the Sholl and persistence functions that there are less branches far from the root in Tau35 neurons than WT neurons. The observation that the topological morphology and Sholl functions are increasing at greater distances from the soma for WT neurons, suggests that branching tends to occur at greater distances from the root in WT neurons. The analysis of the persistence functions, which capture the correlation between branch initiation and branch termination confirms this conclusion. 

\begin{figure}[htbp]
\centering
\resizebox{0.8\textwidth}{!}{
\includegraphics{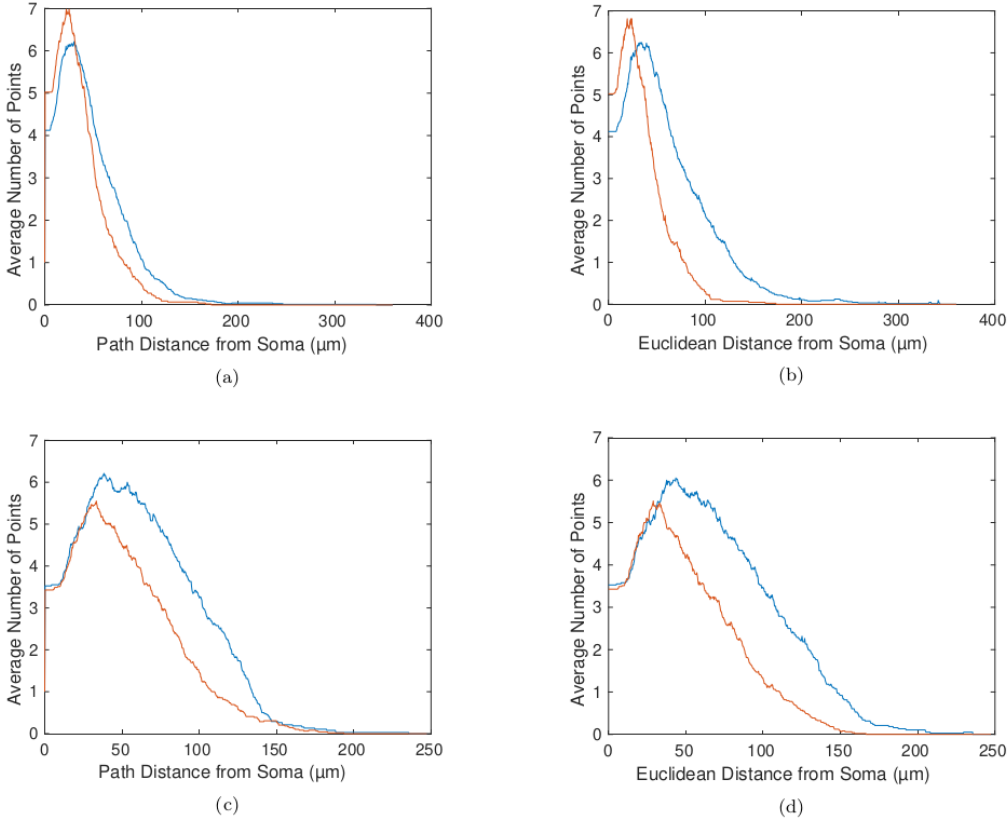}
}

\caption{Average (left column) topological morphology and (right column) Sholl functions for (top row) 10-12 month cortical and (bottom row) 10-12 month hippocampal neurons of both WT (blue) and Tau35 (orange) classes.}
\label{fig:neurfuncs}
\end{figure}
\section{Conclusion}

Barcodes and persistence images are powerful tools for the analysis of tree-like geometries in biological systems, such as neurons. An advantage of persistence images is that they can be used to study statistically groups of neurons. Here, we have further shown that  persistence images based on a notion of path length capture key features of neuronal morphologies and can be used to  discriminate statistically distinct classes of neurons. Once two groups of neurons have been shown to be statistically distinct, the difference between the two groups can be determined through visual inspection of their respective averaged persistence images and their differences. When applied to neurons in healthy and diseased groups, we have shown that the quantitative features of the branches of a neuron or a group of neurons can be easily interpreted from our persistence images.

The natural choice of path length from the soma as an input to the topological morphological descriptor leads to the definition of a topological morphology function similar to the Sholl function. Smooth approximations of this function can be obtained directly from persistence images and in the particular case when branches mostly grow away from the soma, this topological morphology function produces an approximates the Sholl function. By presenting an alternative function for the TMD, we are able to precisely connect the TMD algorithm to persistent homology and
contribute an interpretable descriptor of intrinsic neuronal morphology that complements the toolkit of neuronal analysis.

\section{Acknowledgements}
DB thanks Jacob Leygonie for helpful discussions. DB and HAH are grateful to the support provided by the UK Centre for Topological Data Analysis EPSRC grant EP/R018472/1. DB is grateful for the support of the Mathematical Institute of Oxford. HAH gratefully acknowledges funding from EPSRC EP/R005125/1 and EP/T001968/1, the Royal Society RGF$\backslash$EA$\backslash$201074 and UF150238. AG is grateful for the support by the Engineering and Physical Sciences Research Council of Great Britain under research grants EP/R020205/1. This work was supported by grants from the Alzheimer’s Society and the Alzheimer’s Research UK King’s College London Network Centre to DG and DPH.

\clearpage
\bibliographystyle{abbrv}
\bibliography{refs}

\clearpage
\appendix
\section*{Appendix}
\addcontentsline{toc}{section}{Appendices}
\renewcommand{\thesubsection}{\Alph{subsection}}
\renewcommand\thefigure{A\arabic{figure}}
\renewcommand{\theequation}{A.\arabic{equation}}
\setcounter{figure}{0}
\subsection{Zero dimensional persistent homology for graphs}
This section of the appendix constitutes a brief outline of the concepts from persistent homology that are necessary for other sections of the appendix. We refer the interested reader to \cite{Otter} and \cite[Chapter VII]{edelsbrunner2010computational} for a more complete overview.

Suppose we have a finite connected graph $G$ and a map $f:N(T) \to \mathbb{R}$. We can define the \textit{sublevel sets} $G_t$ to be the subgraphs of $G$, generated by all of the nodes $v$ satisfying $f(v) \leq t$. Any function on $G$ has only finitely many distinct sublevel sets $G_t$, each of which has a lowest corresponding $t$ value $t_i$. Hence we have a nested inclusion of graphs
\begin{equation*}
\varnothing \subseteq G_{t_1} \subseteq G_{t_2} \subseteq \ldots \subseteq G_{t_n} = G,
\end{equation*}
which we can also write as
\begin{equation}
\varnothing \rightarrow G_{t_1} \rightarrow G_{t_2} \rightarrow \ldots \rightarrow G_{t_n} = G,
\label{eqn:graphseq}
\end{equation}
with each arrow denoting the inclusion map. Each graph $G_{t_i}$ has an associated vector space $C_0(G_{t_i})$ given by the $\mathbb{R}$-span of the nodes of $G_{t_i}$. The inclusion maps in the sequence of $G_{t_i}$ then induce linear maps in the sequence
\begin{equation*}
\{0\} \rightarrow C_0(G_{t_1}) \rightarrow C_0(G_{t_2}) \rightarrow \ldots \rightarrow C_0(G_{t_n}) = C_0(G).
\end{equation*}
We can quotient each vector space $C_0(G_{t_i})$ via the relation $v \sim v'$ whenever $v$ and $v'$ are in the same connected component in $C_0(G_{t_i})$. We call the resulting vector spaces $H_0(G_t)$, the \textit{zero dimensional homology} of $G_t$. If $v$ and $v'$ are both in the same connected component of $G_t$, it follows that they are also both in the same connected component of $G_a$ for any $a\geq t$. Hence the inclusion maps above induce inclusion maps between homology groups, giving us the sequence
\begin{equation*}
\{0\} \rightarrow H_0(G_{t_1}) \rightarrow H_0(G_{t_2}) \rightarrow \ldots \rightarrow H_0(G_{t_n}) = H_0(G).
\end{equation*}
This sequence connected by the linear maps induced by inclusion is called the \textit{dimension zero persistent homology module} of $G$ with respect to $f$. We would like to decompose this sequence of vector spaces into simpler sequences of form
\begin{equation*}
    \{0\} \to \ldots \to \{0\} \to \mathbb{R} \to \ldots \to \mathbb{R} \to \{0\} \to \ldots \to \{0\}
\end{equation*}
with identity maps between the spaces $\mathbb{R}$ and zero maps elsewhere. We refer to such a sequence of vector spaces with $\mathbb{R}$ first appearing at the location for $t_i$ and last appearing at the location of $t_j$ as $I(t_i,t_{j+1})$, and we call this type of sequence an \textit{interval module}. For when $j = n$ in this definition, we define $t_{n+1} = \infty$. It was shown in \cite{zomorodian2005computing} that if $\mathcal{M}$ is a zero dimensional persistence module of $G$ with respect to $f$, then $\mathcal{M}$ decomposes uniquely into a direct sum of interval modules
\begin{equation*}
    \bigoplus_{k=1}^K I(t_{i(k)},t_{j(k)}).
\end{equation*}

From this decomposition we associate a barcode, a collection of intervals in the real line, by the mapping
\begin{equation*}
    \bigoplus_{k=1}^K I(t_{i(k)},t_{j(k)}) \mapsto \bigsqcup_{k = 1}^K [t_{i(k)}, t_{j(k)+1})
\end{equation*}
We refer to this barcode as the \textit{persistent homology} of $G$ and $f$, or $\PH(G,f)$. It follows from this decomposition that there are exactly as many infinite intervals, intervals with right endpoint $t_{n+1} = \infty$, in $\PH(G,f)$ as there are connected components in $G$. Since we assume $G$ to be connected, exactly one of these intervals is infinite. Sometimes it is useful to make the resulting barcode only consist of bounded intervals, and we can achieve this, for example, by sending the one instance of $\infty$ to $\max(f)$ and replacing the open endpoint with a closed endpoint. We call the resulting barcode $\BPH(G,f)$.  Readers familiar with extended persistence \cite{cohen2009extending} will note that this is exactly the zero dimensional extended persistence barcode of $G$, justifying the notation.

It is well known that there is a bijection between intervals in $\PH(G,f)$ and the local minima of $f$. These are the connected subgraphs of $G$ induced by sets of nodes on which $f$ takes exactly one value which is less than its value on all neighboring nodes. Suppose all local minima of $f$ are nodes. The values $f(v)$ for minima are the values of the left endpoints of their associated intervals, since they correspond to $t$ values where their associated connected components first appear. Suppose that at time $t_i$ in the sequence given by Equation~\ref{eqn:graphseq} two or more connected components $\{C_l\}_{l=1}^N$ of $G_{t_{i-1}}$ merge. Additionally, suppose there exists unique global minima of $f$ in each $C_l$, denoted $v_l$, and one of the $v_l$ has the greatest $f$ value, without loss of generality $f(v_l) < f(v_1)$ for $l\geq 2$. In this generic case, then the Elder Rule \cite[Theorem 4.4]{cai2021elder} determines that the right endpoint of the interval corresponding to $v_l$ to be $f(t_l)$ for $l\geq2$, giving rise to the interval $[f(v_l),f(t_i))$. If our assumptions hold whenever connected components merge, we can compute the persistent homology by this procedure. Indeed, after applying this process to every merging of connected components in Equation~\ref{eqn:graphseq}, every local minimum whose corresponding right endpoint has yet to be assigned must be the global minimum of its connected component in $G$. Since $G$ is connected, there must only be one such local minimum. Hence, the value of this local minimum must be paired with an infinite right endpoint, as $H_0(G)$ is a one dimensional vector space. In \cite{li2017metrics}, barcodes for a tree $T$ and a function $f$ are obtained by computing $\PH(T,f)$ by this procedure under the assumption of the necessary genericity conditions, and transforming the result to the barcode of bounded intervals $\BPH(T,f)$.

We can easily define define the negation operation on barcodes to be the map which swaps and takes the negative value of interval endpoints. For example, we have
\begin{equation*}
    -\big( [1,\infty) \sqcup [2,5) \sqcup [3,4)\big) = (-\infty,-1] \sqcup (-5,-2] \sqcup (-4,-3]
\end{equation*}
Similarly, we can also define a switching map $S$ which switches the endpoints of each interval. For example,
\begin{equation*}
    S\big( [1,\infty) \sqcup [2,5) \sqcup [3,4)\big) = (\infty,1] \sqcup (5,2] \sqcup (4,3].
\end{equation*}
The switching map does not send intervals in $\mathbb{R}$ to intervals in $\mathbb{R}$ since the right endpoint will now be less than or equal to the left endpoint of every interval. Nevertheless, this is a useful notion to consider.

In the sequence of Equation $\ref{eqn:graphseq}$ we constructed $G$ sequentially via the sublevel sets of $f$. We could proceed analogously using the \textit{superlevel sets} of $f$, the subgraphs $G^t$ generated by the nodes $v$ satisfying $f(v) \geq t$, this time with $t$ decreasing with each inclusion map. The exact same procedure gives us a barcode corresponding to the sublevel sets of $f$, with the caveat that the so-called intervals $[x,y)$ in this barcode satisfy $x \geq y$. Similarly, we can make these intervals finite by replacing the of $-\infty$ in the resulting barcode with $\min(f)$, and replacing the open endpoint with a closed endpoint. It is readily verified that applying persistent homology to the superlevel sets of $f$ gives us the barcode $-S(\PH(G,-f))$. Applying persistent homology to the superlevel sets composed with the map $-\infty \mapsto \min(f)$ which changes the infinite endpoint to be closed is then easily seen to be $-S(\BPH(G,-f))$.

\subsection{Metrics on the space of persistence diagrams}

We first recall the definition of a persistence diagram.

\begin{definition}
A \textbf{persistence diagram} is a multiset of points in $\bar{\mathbb{R}}^2$ that contains every diagonal point $(x,x)$ for $x\in \mathbb{R}$ with infinite multiplicity. Here, $\bar{\mathbb{R}}$ denotes the extended real line $\mathbb{R} \cup \{-\infty\} \cup \{\infty\}$.
\end{definition}

Every barcode induces a persistence diagram by the map which sends each interval with left endpoint $x$ and right endpoint $y$ to the point $(x,y)$ and then includes every diagonal point $(x,x)$ with infinite multiplicity. This map is well defined even if $y < x$ for some subcollection of intervals.  We restrict our attention to persistence diagrams with finitely many off-diagonal points, each with finite multiplicity. Indeed, all finite trees $T$ must have persistence diagrams associated to $\TMD(T,f)$ and $\BPH(T,f)$ of this type since $T$ has finitely many nodes.

Given pairs of persistence diagrams $D_1$ and $D_2$, it is often useful to have a numerical measure of how different they are. Two popular such measurements of the difference between pairs of persistence diagrams are the $q$-Wasserstein and bottleneck distances.

\begin{definition}
For $q\geq 1$ the \textbf{$q$-Wasserstein distance} between persistence diagrams $D_1$ and $D_2$ is given by
\begin{equation*}
W_q(D_1,D_2) := \inf_{\phi} \Big{(} \sum_{x \in D_1} \lVert x - \phi(x)\rVert_\infty^q \Big{)}^{1/q}
\end{equation*}
where the infimum is taken over all bijective maps $\phi:D_1 \to D_2$. If the term inside the infimum is never finite then we let $W_q(D_1,D_2) = \infty$. We define the \textbf{bottleneck distance} similarly by
\begin{equation*}
d_B(D_1,D_2) = W_\infty(D_1,D_2) := \inf_{\phi} \sup_{x \in D_1} \lVert x - \phi(x)\rVert_\infty.
\end{equation*}
We say a bijective map $\phi:D_1\to D_2$ satisfying
\begin{equation*}
    \sup_{x \in D_1} \lVert x - \phi(x)\rVert_\infty \leq \delta
\end{equation*}
is a \textbf{$\delta$-matching} of $D_1$ and $D_2$.
\end{definition}
For this definition, we take $|\infty - \infty| = 0$ and $|\infty - t| = \infty$ for any $t\in\bar{\mathbb{R}}-\{\infty\}$, and similarly define the absolute value for $-\infty$.

The bottleneck distance is shown to be an extended metric on the space of persistence diagrams with finitely many off-diagonal points in \cite[p. 219]{edelsbrunner2010computational}. The Wasserstein distances are shown to be extended metrics similarly.

When comparing two barcodes $B_1$ and $B_2$ we abuse notation and let $W_q(B_1,B_2)$ and $d_B(B_1,B_2)$ denote the Wasserstein and bottleneck distances of their associated persistence diagrams. It should be noted that distance these functions are not metrics on the space of barcodes, for two barcodes can have the same persistence diagram and yet differ on the openness or closedness of their endpoints. As this information is all that is lost in the mapping from barcodes to persistence diagrams, this is the only way two barcodes of bottleneck or Wasserstein distance zero can differ.

\subsection{The TMD vs persistent homology}

The TMD algorithm also produces a barcode $\TMD(T,f)$ from a tree $T$ equipped with a function $f$ \cite{kanari2018topological}. We recall that the algorithm can be paraphrased as follows.
\begin{enumerate}
\item Choose a branch point, and identify a child branch of that branch point that attains the greatest value of $f$ on its leaves.
\item Detach every child branch from this branch point except for the identified branch.
\item If there are any branch points in the resulting collection of trees, return to step~1.
\item Label the endpoints of the resulting collection intervals with the $f$ values associated to their endpoints.
\end{enumerate}
Our goal is to connect the TMD operation to the $\BPH$ operation for a certain class of functions on $T$ including $d$, the intrinsic distance from the root. Prior related work, for example \cite{kanari2020trees} and \cite{li2017metrics}, has remarked and alluded that for functions $f$ which increase along paths moving away from the root, the TMD and $\BPH$ coincide. We provide a proof of this fact.

\begin{restatable}{theorem}{coresp}
Let $T$ be a finite rooted tree with root $r$, and $f$ be a function which is strictly increasing along the directed edges induced by $r$. We have that
\begin{equation*}
    \TMD(T,f) = -\BPH(T,-f).
\end{equation*}
\label{thm:coresp}
\end{restatable}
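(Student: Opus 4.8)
The plan is to set up a bijection between the intervals produced by the TMD algorithm and the intervals of $\BPH(T,-f)$, and to check that this bijection negates and swaps endpoints. The key observation is that running the TMD algorithm on $(T,f)$ with $f$ strictly increasing along directed edges is the same combinatorial process as running the Elder-Rule computation of $\PH$ on the superlevel sets of $f$ (equivalently, the sublevel sets of $-f$), so the two barcodes ought to be identified on the nose once one accounts for the negation/switching bookkeeping already introduced in the appendix. Concretely, I would prove $\BPH(T,-f) = S\big(-\TMD(T,f)\big)$, which rearranges to the claimed identity since $-S$ is an involution; this matches the remark in the appendix that applying persistent homology to the superlevel sets of $f$ with the infinite endpoint closed gives $-S(\BPH(G,-f))$.

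First I would record the relevant genericity-free structure: because $f$ is strictly increasing along every directed edge, $f$ attains a strict local maximum exactly at each leaf of $T$, and $f$ attains its unique global minimum at the root $r$ (with $f(r)$ its only occurrence as a minimum value along the path structure). Dually, $-f$ attains strict local minima exactly at the leaves and its global minimum is not at a leaf — so the local minima of $-f$ are in bijection with the leaves of $T$, which is already the right count to match $\BPH(T,-f)$ against $\TMD(T,f)$ (recall the TMD has exactly one interval per leaf, via the bijection between right endpoints and leaves noted in the main text).

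Next I would build the bijection leaf-by-leaf. For a leaf $\ell$, the TMD algorithm assigns it the interval $(x_\ell, f(\ell)]$ (or $[f(r), L]$ if $\ell$ is the distinguished leaf attaining the global maximum $L$), where $x_\ell$ is the $f$-value of the branch point at which $\ell$'s branch is detached. On the persistent-homology side, in the filtration of $G^t = $ superlevel sets of $f$ (decreasing $t$), the component "born" at $\ell$ first appears at parameter $f(\ell)$ and dies when it merges with a component containing a leaf of strictly larger $f$-value; by the Elder Rule this merge happens precisely at the branch point of $f$-value $x_\ell$ — and this is the \emph{same} branch point, because "the child branch attaining the greatest leaf value survives" (TMD step 1) is exactly "the older component absorbs the younger" (Elder Rule). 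Hence the superlevel-set barcode contains the interval $[f(\ell), x_\ell)$ (with $[L, f(r)]$-type closure for the infinite/distinguished one after the $\infty \mapsto \min$ substitution), which is exactly the switch-and-negate of $(x_\ell, f(\ell)]$ when translated through $\BPH(T,-f)$. I would then verify endpoint openness/closedness agrees: right endpoints of TMD intervals are closed (bijection with leaves) and map to closed left endpoints of $\BPH(T,-f)$; the sole closed left endpoint $f(r)$ of TMD maps to the closed endpoint coming from the $\infty \mapsto \min(-f) = -f(\text{something})$ substitution; all other TMD left endpoints are open (created by detaching) and correspond to the half-open merge endpoints of persistent homology. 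Finally, multiplicities are handled automatically since both constructions record one interval per leaf.

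The main obstacle is the non-generic case: when two or more child branches at a branch point $b$ attain the same maximal leaf value (equivalently, when $-f$ has multiple components with equal "elder" minima merging at once). The main text already says the TMD makes an arbitrary tie-breaking choice with no effect on the output; I would argue the matching argument is robust to this by noting that any consistent tie-breaking on the TMD side can be mirrored by the corresponding tie-breaking in the Elder Rule computation, and that in either case the \emph{multiset} of produced intervals is independent of the choice — so the identity of barcodes holds regardless. A careful treatment here requires either invoking the uniqueness of the interval-module decomposition (stated in the appendix via \cite{zomorodian2005computing}) to pin down the PH side independently of choices, or doing a small induction on the number of branch points showing detaching one branch commutes with the filtration-level merge; I expect the cleanest writeup uses the former, reducing the tie-breaking worry to the already-cited uniqueness theorem and the already-noted choice-independence of the TMD.
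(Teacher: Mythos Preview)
Your proposal is correct and takes essentially the same approach as the paper: both arguments rest on the observation that the leaves of $T$ are exactly the local minima of $-f$, and that the TMD's rule ``keep the child branch with the greatest leaf value'' coincides with the Elder Rule's ``the older component survives the merge'', so the two algorithms produce the same interval for each leaf, with the distinguished root/infinite interval matched separately. The only organizational difference is that the paper runs the two computations concurrently, processing branch points in order of decreasing $f$-value so that every detached child branch is already a single interval at the moment of detachment; this slightly streamlines the Elder-Rule side and sidesteps the need for your final paragraph on tie-breaking, but the underlying argument is the same.
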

\begin{proof}

Let $l_0$ be one of the leaves with the largest $f$ value, and let $f(l_0) = L$. From the fact that $f$ is increasing with respect to the directed edges of $T$, it follows that the leaves of $T$ are the only local minima of $-f$, and so the intervals of $\PH(T,-f)$ bijectively correspond to leaves of $T$. Similarly, the TMD algorithm also associates an interval to each leaf of $T$. We show the theorem holds by computing $\TMD(T,f)$ and $\PH(T,-f)$ concurrently. For each iteration of the TMD algorithm, choose $b$, one of the branch points with the greatest value of $f$, in order to ensure that every detached child branch is actually an interval, and thus contains a single leaf. Then, choose one of the child branches which attains the largest value of $f$ on its leaf $l$. The Elder Rule \cite[Theorem 4.4]{cai2021elder} then dictates that each child branch containing a leaf $l'\neq l$ may be associated to the interval $[-f(l),-f(b))$ in $\PH(T,-f)$. Meanwhile the TMD algorithm associates to the child branch containing $l'\neq l$ the interval $(f(b),f(l')]$. The TMD algorithm then dictates that we detach the child branch containing each $l'\neq l$. This does not change the left endpoint given by the Elder Rule for an interval corresponding to any other leaf since removing the branch from $b$ to $l'$ does not change the minimal value of $f$ at any of the remaining child branches of branch points of $T$, nor the value of $f$ the remaining branch points of $T$ themselves. After completing this procedure for every branch point of $T$, what remains to be computed is the infinite interval $[-L,\infty)$ in $\PH(T,-f)$ and the finite interval $[f(r),L]$ in $\TMD(T,f)$. From the monotonicity assumption, $f(r) = \max(-f)$, and so it is immediate that $\TMD(T,f) = -\BPH(T,-f)$.
\end{proof}

An immediate consequence of this result is that the methods of \cite{kanari2018topological} and \cite{li2017metrics} can record identical information for functions $f$ satisfying the requirements of the theorem. In particular, the function $d$, the path distance to the root, satisfies the requirements of this theorem, and so $\TMD(T,d) = -\BPH(T,-d)$.

\subsection{Topological morphology functions from persistence surfaces}
For a neuron represented by a tree $T$, recall that the associated topological morphology function $p(t)$ returns the number of points on the neuron with an intrinsic distance of $t$ from the root.

We can transform the persistence diagram of $\TMD(T,f)$ by the map $(x,y)\mapsto (x,y-x)$. If $d$ is the intrinsic distance from the root function, we have shown in the main text that for $t$ not an endpoint of an interval of $\TMD(T,d)$, $p(t)$ is the number of points in transformed persistence diagram of $\TMD(T,d)$ in the region given by $x <t$ and $y>t-x$, which we call $R_t$. Adding together two dimensional Gaussian functions with standard deviation $\sigma$ centered at each point on the transformed persistence diagram of $\TMD(T,d)$, each weighted by the $y$ value of their center, we produce the persistence surface $F_\sigma(x,y)$. Theorem \ref{thm:limfunc}, which remains to be proven, shows that an approximation of $p$ can be constructed from $F_\sigma$.

We will formally restate this theorem shortly, but first we prove following lemma, which will be of use to us.

\begin{lemma}
\label{lem:distdelt}
Consider the family of boxes~$B_\delta(\mu)$ of width and height~$\delta$ centered around~$\mu = (\mu_x,\mu_y)$. Let~$g_\sigma(x,y;\mu)$ be a family of functions for positive $\sigma$ satisfying the following properties:
\begin{enumerate}
\item The function $g_\sigma(x,y;\mu)$ is positive, and is bounded for fixed $\sigma$.
\item $\int_{\mathbb{R}^2} g_\sigma(x,y;\mu)\,dx\,dy = 1$.
\item $g_\sigma(x,y;\mu) \to 0$ uniformly as $\sigma \to 0$ on $B_\delta^C(\mu)$, the complement of $B_\delta(\mu)$.
\item $\int_{B_\delta(\mu)} g_\sigma(x,y;\mu)\,dx\,dy \to 1$ as $\sigma \to 0$.
\item Every partial derivative of $g_\sigma(x,y;\mu)$ with respect to $x$ is continuous, bounded for fixed~$\sigma$, and the integrals of their absolute values over linear domains in $x$ and $y$ is bounded.
\end{enumerate}
Then
\begin{equation}
\lim_{\sigma \to 0} \int_{R_t}\frac{g_\sigma(x,y;\mu)}{y}\,dx\,dy
\label{eqn:glimfunc}
\end{equation}
is equal to zero if $\mu$ is in the exterior of $R_t$ and is equal to $\mu_y^{-1}$ if $\mu$ is in the interior of $R_t$. Further, for each positive $\sigma$,
\begin{equation*}
\int_{R_t}\frac{g_\sigma(x,y;\mu)}{y}\,dx\,dy
\end{equation*}
is infinitely differentiable as a function of $t$.
\end{lemma}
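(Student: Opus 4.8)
The plan is to analyze the two regimes separately: $\mu$ in the exterior of $R_t$, and $\mu$ in the interior. In both cases the region $R_t = \{x < t,\ y > t - x\}$ is bounded away from the $x$-axis on the part that matters (since $y > t - x > 0$ whenever $x < t$... wait, that needs care near $x = t$), so I will first note that on $R_t$ the integrand $g_\sigma/y$ is well-behaved away from the corner $(t,0)$, and that $g_\sigma$ decays fast enough there that the integral converges. For the limit itself: if $\mu$ lies in the exterior of $R_t$, then some box $B_\delta(\mu)$ is disjoint from $R_t$, so $\int_{R_t} g_\sigma/y \le \tfrac{1}{\inf_{R_t} y}\int_{B_\delta^C(\mu)} g_\sigma \to 0$ by property (3) (uniform decay on $B_\delta^C(\mu)$, times the finite-ish area; one should really split $R_t$ into a bounded piece and a tail and use property (1) or (2) on the tail). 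If $\mu$ is in the interior of $R_t$, pick $\delta$ small enough that $B_\delta(\mu) \subseteq R_t$; then
\begin{equation*}
\int_{R_t}\frac{g_\sigma(x,y;\mu)}{y}\,dx\,dy = \int_{B_\delta(\mu)}\frac{g_\sigma}{y}\,dx\,dy + \int_{R_t \setminus B_\delta(\mu)}\frac{g_\sigma}{y}\,dx\,dy,
\end{equation*}
the second term going to $0$ by (3) as above, and the first term tending to $\mu_y^{-1}$ because $1/y$ is continuous on $B_\delta(\mu)$ hence within $\varepsilon$ of $1/\mu_y$ there, while $\int_{B_\delta(\mu)} g_\sigma \to 1$ by (4). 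Shrinking $\delta$ afterwards removes the $\varepsilon$.

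For the smoothness-in-$t$ claim, I would change variables to make the $t$-dependence explicit in the integrand rather than in the domain. Writing $R_t$ as $\{(x,y): x < t,\ y > t-x\}$, substitute $u = x$, $v = y - (t - x) = y - t + x$, so that $R_t$ becomes the fixed quadrant-like region $\{u < t,\ v > 0\}$... this still has $t$ in the domain. A cleaner route: differentiate under the integral sign after writing the $x$-integral as running over $(-\infty, t)$ with inner $y$-limits $(t - x, \infty)$. Then
\begin{equation*}
\Phi(t) := \int_{-\infty}^{t}\!\!\int_{t-x}^{\infty}\frac{g_\sigma(x,y;\mu)}{y}\,dy\,dx,
\end{equation*}
and $\partial_t \Phi$ picks up a boundary term from the outer limit $x = t$ (which vanishes, since there the inner integral runs over $(0,\infty)$ but gets multiplied by... actually at $x=t$ the inner lower limit is $0$, so one must check the inner integral $\int_0^\infty g_\sigma(t,y;\mu)/y\,dy$ is finite — this is where property (1), boundedness, plus the Gaussian tail, is used, but near $y = 0$ one needs $g_\sigma(t, y; \mu) \to 0$; for a genuine Gaussian centered at $\mu$ with $\mu_y > 0$ this holds) plus an interior term $-\int_{-\infty}^t g_\sigma(x, t-x; \mu)/(t-x)\,dx$ from differentiating the inner lower limit. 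Iterating, each derivative in $t$ produces similar boundary contributions that are themselves integrals of $g_\sigma$ and its $x$-derivatives against smooth functions of $t$; property (5) — continuity and boundedness of all $x$-partials, with controlled integrals over linear domains — is exactly what guarantees these differentiations are legitimate and that the result is again $C^\infty$ in $t$. So I would set up an induction on the order of the derivative, at each stage expressing $\partial_t^k \Phi$ as a finite sum of integrals of the form $\int (\partial_x^j g_\sigma)(x, \ell(x,t);\mu) \cdot h(x,t)\,dx$ with $\ell, h$ smooth, and invoke Leibniz plus dominated convergence (justified by (5)) to pass the next derivative inside.

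The main obstacle I expect is the bookkeeping around the corner point $(t, 0)$ of $R_t$ and the $1/y$ singularity: one must be careful that the region of integration, although it touches arbitrarily small $y$ as $x \to t^-$, does so only on a set where $g_\sigma$ is small (for $\sigma$ fixed, $g_\sigma(x,y;\mu)$ with $\mu_y > 0$ decays as $y \to 0$), so that $\int_{R_t} g_\sigma/y$ converges and, more delicately, that the boundary terms generated when differentiating in $t$ remain finite and smooth. This is precisely the role of hypothesis (5), and verifying that a weighted sum of Gaussians actually satisfies (5) is deferred — here we only use it. A secondary nuisance is making the exterior-case estimate rigorous: $R_t$ has infinite area, so "bounded integrand times area" does not immediately work; I would split $R_t$ into $R_t \cap B_\delta^C(\mu)$ restricted to a large ball (where uniform decay (3) kills it) and the tail outside the ball (where $\int g_\sigma \le 1$ from (2) and $1/y$ is bounded on $R_t$ outside a neighborhood of the corner, noting that on the tail $y$ is large). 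Neither obstacle is conceptually deep, but both require care to state cleanly.
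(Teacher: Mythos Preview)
Your limit argument for the interior case is essentially right and matches the paper. The real trouble is everywhere you lean on the claim that, for a Gaussian centered at $\mu$ with $\mu_y>0$, $g_\sigma(x,y;\mu)\to 0$ as $y\to 0$. That is false: a two-dimensional Gaussian is strictly positive on all of $\mathbb{R}^2$, so $g_\sigma(t,0;\mu)$ is a positive constant. This breaks your differentiability computation outright. With your order of integration,
\[
\Phi(t)=\int_{-\infty}^{t}\!\!\int_{t-x}^{\infty}\frac{g_\sigma(x,y;\mu)}{y}\,dy\,dx,
\]
the Leibniz boundary term at $x=t$ is $\int_0^\infty g_\sigma(t,y;\mu)/y\,dy$, which diverges at $y=0$ since the integrand behaves like $C/y$ there. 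The companion term $-\int_{-\infty}^t g_\sigma(x,t-x;\mu)/(t-x)\,dx$ likewise diverges at $x=t$. Individually infinite terms that formally cancel is exactly the situation in which the naive Leibniz rule does not apply, so your induction never gets off the ground. The same misconception undercuts your proposed handling of the corner $(t,0)$ in the exterior case: you cannot appeal to $g_\sigma$ being small near $y=0$, because it is not.

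The paper's device is to reverse the order of integration and put the $x$-integral on the inside:
\[
I(t)=\int_0^\infty\!\!\int_{t-y}^{t}\frac{g_\sigma(x,y;\mu)}{y}\,dx\,dy.
\]
The inner interval has width exactly $y$, so $\int_{t-y}^t g_\sigma/y\,dx\le M$ with $M$ a bound for $g_\sigma$; the $1/y$ is neutralized by the geometry of $R_t$, not by any decay of $g_\sigma$. Differentiating the inner integral in $t$ gives $\bigl(g_\sigma(t,y;\mu)-g_\sigma(t-y,y;\mu)\bigr)/y$, which by the mean value theorem is bounded by $\sup|\partial_x g_\sigma|$, and higher derivatives are handled identically using property~(5). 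For the limits, the paper uses the same cancellation via the triangle $\Delta_h=R_t\cap\{y<h\}$: one has $\int_{\Delta_h} g_\sigma/y\le S(\sigma)\,h$ (with $S(\sigma)=\sup_{B_\delta^C(\mu)}g_\sigma$ in the exterior case) and $\int_{R_t\setminus\Delta_h} g_\sigma/y\le 1/h$ from property~(2); choosing $h=S(\sigma)^{-1/2}$ sends both to zero. So the missing idea is not bookkeeping but this single observation that the width of the $x$-slice of $R_t$ at height $y$ is $y$ itself.
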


\begin{proof}
We only use property 5 to show the differentiability condition holds at the end. From properties 2 and 4 we immediately have that $\int_{B_\delta^C(\mu)} g_\sigma(x,y;\mu)\,dx\,dy \to 0$ as $\sigma \to 0$.

We first show that the integral in the theorem converges regardless of $t$. For positive~$h$ let $\Delta_h$ denote the triangular subset of $R_t$ of points $(x,y)$ additionally satisfying that $y<h$. Let $M$ be a bound for $g_\sigma(x,y;\mu)$ for a given $\sigma$. We have
\begin{align*}
&\int_{\Delta_h} \frac{g_\sigma(x,y;\mu)}{y}\,dx\,dy= \int_0^h \int_{t-y}^t \frac{g_\sigma(x,y;\mu)}{y} \,dx\,dy\leq \int_0^h \int_{t-y}^t \frac{M}{y}\,dx\,dy = Mh,\\
&\int_{R_t-\Delta_h} \frac{g_\sigma(x,y;\mu)}{y}\,dx\,dy \leq \int_{R_t-\Delta_h} \frac{g_\sigma(x,y;\mu)}{h}\,dx\,dy \leq \frac{1}{h}.
\end{align*}

To establish the limiting value in the theorem, first consider $g_\sigma(x,y;\mu)$ fixing $\mu$ in the exterior of $R_t$. For such $\mu$ there exists a~$\delta$ sufficiently small that $B_\delta(\mu)$ is disjoint from $R_t$. Let $S(\sigma)$ be the supremum of $g_\sigma(x,y;\mu)$ on $B_\delta^C(\mu)$ as a function of $\sigma$. Property 3 implies that this approaches 0 as $\sigma$ does. We have

\begin{align*}
&\int_{\Delta_h} \frac{g_\sigma(x,y;\mu)}{y}\,dx\,dy= \int_0^h \int_{t-y}^t \frac{g_\sigma(x,y;\mu)}{y} \,dx\,dy\leq \int_0^h \int_{t-y}^t \frac{S(\sigma)}{y}\,dx\,dy = hS(\sigma),\\
&\int_{R_t-\Delta_h} \frac{g_\sigma(x,y;\mu)}{y}\,dx\,dy \leq \int_{R_t-\Delta_h} \frac{g_\sigma(x,y;\mu)}{h}\,dx\,dy \leq \frac{1}{h}.
\end{align*}

Letting $h = S(\sigma)^{-1/2}$ we observe
\begin{equation*}
0 \leq \int_{R_t} \frac{g_\sigma(x,y;\mu)}{y}\,dx\,dy \leq 2S(\sigma)^{1/2},
\end{equation*}
which tends to zero as $\sigma$ does.

If on the other hand $\mu$ lies in $R_t$, again we can choose $\delta$ small enough that $B_\delta(\mu)$ is a subset of $R_t$. Define $S(\sigma)$ as before. Once again we obtain bounds 
\begin{align*}
&\int_{\Delta_h - B_\delta(\mu)} \frac{g_\sigma(x,y;\mu)}{y}\,dx\,dy \leq \int_{\Delta_h} \frac{S(\sigma)}{y}\,dx\,dy = \int_0^h \int_{t-y}^t \frac{S(\sigma)}{y}\,dx\,dy = hS(\sigma),\\
&\int_{R_t-\Delta_h\cup B_\delta(\mu)} \frac{g_\sigma(x,y;\mu)}{y}\,dx\,dy \leq \int_{R_t-\Delta_h} \frac{g_\sigma(x,y;\mu)}{h}\,dx\,dy \leq \frac{1}{h}.
\end{align*}
Combining these two results, and letting $h = S(\sigma)^{-1/2}$ we observe
\begin{equation*}
0 \leq \int_{R_t - B_\delta(\mu)} \frac{g_\sigma(x,y;\mu)}{y}\,dx\,dy \leq 2S(\sigma)^{1/2},
\end{equation*}
which approaches zero as $\sigma$ approaches zero. Meanwhile, we also have
\begin{align*}
&\int_{B_\delta(\mu)} \frac{g_\sigma(x,y;\mu)}{y}\,dx\,dy \leq \int_{B_\delta(\mu)} \frac{g_\sigma(x,y;\mu)}{\mu_y - \delta}\,dx\,dy \leq \frac{1}{\mu_y - \delta}\\
&\int_{B_\delta(\mu)} \frac{g_\sigma(x,y;\mu)}{y}\,dx\,dy \geq \int_{B_\delta(\mu)} \frac{g_\sigma(x,y;\mu)}{\mu_y + \delta}\,dx\,dy \to \frac{1}{\mu_y + \delta},
\end{align*}
with the limit in the last line taken as $\sigma \to 0$. Since $\delta$ can be taken arbitrarily small, this implies
\begin{equation*}
\int_{B_\delta(\mu)}  \frac{g_\sigma(x,y;\mu)}{y}\,dx\,dy \to \frac{1}{\mu_y}.
\end{equation*}
Hence,
\begin{equation*}
\int_{R_t}  \frac{g_\sigma(x,y;\mu)}{y}\,dx\,dy \to \frac{1}{\mu_y}.
\end{equation*}
In summary, we have shown that the above integral approaches $\mu_y^{-1}$ when $\mu$ is interior to $R_t$ and approaches zero when $\mu$ is exterior to $R_t$.

All that remains to be shown is the differentiability statement of the lemma, i.e. we must show that the integral
\begin{equation*}
I(t;\mu) := \int_{R_t}\frac{g_\sigma(x,y;\mu)}{y}\,dx\,dy = \int_0^\infty \int_{t-y}^t \frac{g_\sigma(x,y;\mu)}{y}\,dx\,dy 
\end{equation*}
is an infinitely differentiable function of t. 

To begin, notice that by continuity of $g$ we can extend the integrand to $y=0$ via
\begin{equation*}
\lim_{y \to 0}  \int_{t-y}^t \frac{g_\sigma(x,y;\mu)}{y}\,dx = g_\sigma (t,y;\mu).
\end{equation*}
Further, we have the estimate for positive $C$:
\begin{equation*}
\int_C^\infty  \int_{t-y}^t \frac{g_\sigma(x,y;\mu)}{y}\,dx \, dy \leq \int_C^\infty  \int_{t-y}^t \frac{g_\sigma(x,y;\mu)}{C}\,dx \, dy \leq \int_{\mathbb{R}^2}\frac{g_\sigma(x,y;\mu)}{C}\,dx \, dy \leq \frac{1}{C},
\end{equation*}
showing uniform convergence of the same integral with $C$ taken to be 0, since $g$ is positive valued.
We also have that
\begin{align*}
\frac{\partial^n}{\partial t^n}   \int_{t-y}^t \frac{g_\sigma(x,y;\mu)}{y}\,dx \, dy &= \frac{\partial^{n-1}}{\partial t^{n-1}}  \frac{g_\sigma(t,y;\mu)- g_\sigma(t-y,y;\mu)}{y} \\
&= \frac{1}{y}\Big{[}\frac{\partial^{n-1}}{\partial t^{n-1}}g_\sigma(t,y;\mu)- \frac{\partial^{n-1}}{\partial t^{n-1}}g_\sigma(t-y,y;\mu)\Big{]}.
\end{align*}
This expression is bounded by any constant bounding of $|\frac{\partial^n}{\partial t^n}g_\sigma(t,y;\mu)|$. Hence, applying $\int_0^\infty dy$ to this expression, the lower limit converges uniformly. Meanwhile, letting $M$ be a bound for the integral of $|\frac{\partial^{n-1}}{\partial t^{n-1}}g_\sigma(t,y;\mu)|$ over linear domains in $t$ and $y$, we observe that the upper limit converges uniformly as well since
\begin{align*}
&\bigg{|}\int_C^\infty \frac{1}{y}\Big{[}\frac{\partial^{n-1}}{\partial t^{n-1}}g_\sigma(t,y;\mu)-  \frac{\partial^{n-1}}{\partial t^{n-1}}g_\sigma(t-y,y;\mu)\Big{]}\,dy \bigg{|}\\
&\leq \int_C^\infty \frac{1}{C} \bigg{[}\Big{|}\frac{\partial^{n-1}}{\partial t^{n-1}}g_\sigma(t,y;\mu)\Big{|} + \Big{|} \frac{\partial^{n-1}}{\partial t^{n-1}}g_\sigma(t-y,y;\mu)\Big{|}\bigg{]}\,dy \\
&\leq\frac{2M}{C}.
\end{align*}
From this we know we can interchange the partial derivative and integral in the expression for $I(t;\mu)$, giving us the formula
\begin{equation*}
\frac{\partial^n}{\partial t^n} I(t;\mu) = \int_0^\infty \frac{1}{y}\Big{[}\frac{\partial^{n-1}}{\partial t^{n-1}}g_\sigma(t,y;\mu)-  \frac{\partial^{n-1}}{\partial t^{n-1}}g_\sigma(t-y,y;\mu)\Big{]}\,dy,
\end{equation*}
completing the proof.
\end{proof}

With this Lemma we can easily show Theorem \ref{thm:limfunc}.

\limfunc*

\begin{proof}[Proof of Theorem \ref{thm:limfunc}]
Let $\mu_1,\ldots,\mu_N$ be the coordinates of the off-diagonal points in the transformed persistence diagram of $\TMD(T,f)$. The persistence surface is then given by the formula
\begin{equation}
F_\sigma(x,y) = \sum_{i=1}^N (\mu_i)_y \, g_\sigma(x,y;\mu_i).
\label{eqn:fdef}
\end{equation}
Each function $g_\sigma(x,y;\mu_i)$ is a two dimensional Gaussian function, for which the requirements of Lemma~\ref{lem:distdelt} are elementary properties. Each statement of Theorem~\ref{thm:limfunc} is immediate from the fact that $F_\sigma$ is a finite linear combination of functions satisfying the requirements of Lemma~\ref{lem:distdelt}. Indeed, Equation~\ref{eqn:fdef} is clearly infinitely differentiable from the previous lemma. When $t$ is generic with respect to $B$, it is easily seen that each $\mu_i$ is not on the boundary of $R_t$. Applying Lemma \ref{lem:distdelt} and Equation \ref{eqn:fdef}, we see that the limit
\begin{equation*}
    \lim_{\sigma \to 0} \int_{R_t}\frac{F_\sigma(x,y)}{y}\,dx\,dy
\end{equation*}
evaluates to the number of $\mu_i$ in $R_t$, which is equal to $p(t)$.
\end{proof}

Let $T$ be a finite tree, with an embedding $P$ to $\mathbb{R}^n$, and $f$ be the Euclidean distance of each point represented by $v$ to the root $r$.  Recall that we define the Sholl function $s(t)$ to be the number times the $n$-sphere of radius $t$ centered about $P(r)$ intersects with $P(T)$. Let $D$ be persistence diagram associated to the barcode
\begin{equation}
    \BPH(T,f) \sqcup -S(\BPH(T,-f)),
\label{eqn:Ddef}
\end{equation}
the disjoint union of barcodes given by persistent homology of the superlevel and sublevel sets of $f$. We say that $t$ is generic if it is not an endpoint of an interval in $D$. If instead of assuming the embedding of edges of $T$ in $\mathbb{R}^n$ is linear, we assume that $P$ is such that $f$ is linear on edges, Li et al. show in \cite[section 2.4]{li2017metrics} that the value $s(t)$ can be recovered for any generic $t$. For the remainder of the appendix, we only consider $T$ with such an embedding. For generic $t$ between 0 and $\max(f)$, the Sholl function $s(t)$ is the number of points above and to the left or below and to the right of $(t,t)$ minus one. More formally written, if $Q_t$ is the set of $(x,y)$ such that ($x < t$ and $y > t$) or ($x > t$ and $y < t$), then
\begin{equation}
s(t) = \big|D\cap Q_t \big| -1.
\end{equation}
If $t$ is generic, but $t$ is not between 0 and $max(f)$ it is easily seen that the left side of this equation is 0 while the right side of this equation is -1. We show how $s(t)$ may be calculated from $D$ visually in Figure \ref{fig:lisholl}.

\begin{figure}[htbp]
\centering
\resizebox{0.5\textwidth}{!}{
\includegraphics{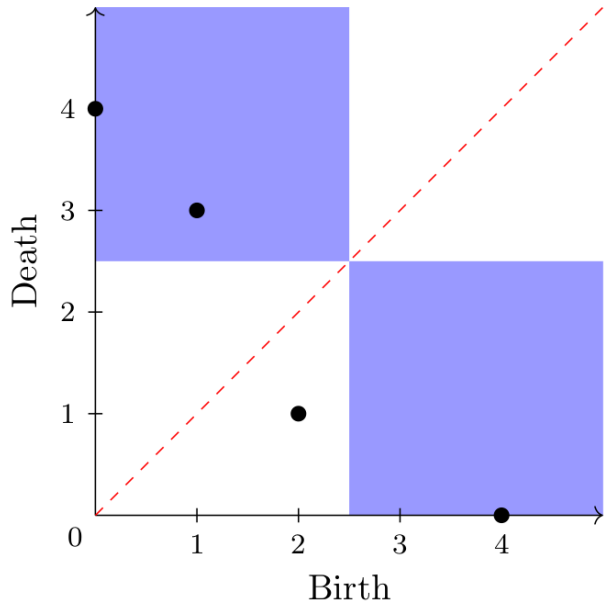}
}

\caption{Calculating the $s(2.5)$ from the persistence diagram $D$. Shown in blue is the region $Q_{2.5}$, within which there are three points. Hence, $s(2.5) = 2$.}
\label{fig:lisholl}
\end{figure}

One may similarly define a persistence surface $F_\sigma$ for a barcode $D$ by adding a two dimensional Gaussian function for each point in $D$, with multiplicity, weighted by the vertical distance of each point from the diagonal\footnote{It is common to skip the transformation step when there are points below the diagonal in the initial barcode.}. We can use our Lemma \ref{lem:distdelt} to show that an approximate Sholl function can be recovered from this persistence surface.

\begin{theorem}
Let $T$ be a rooted tree with Sholl function $s$ and $f$ the associated Euclidean distance from the root function. Assume that between adjacent vertices, the embedding of $T$ in $\mathbb{R}^2$ or $\mathbb{R}^3$ is such that $f$ is linear. Let $D$ be the persistence diagram of the barcode given by Equation \ref{eqn:Ddef}. Let $F_\sigma(x,y)$ be the persistence surface corresponding to $D$ constructed with Gaussian functions of standard deviation $\sigma$. For any generic positive number $t$ between 0 and $\max(f)$,

\begin{equation*}
s(t) = \lim_{\sigma \to 0} \int_{Q_t}\frac{F_\sigma(x,y)}{|x-y|}\,dx\,dy-1.
\end{equation*}
Otherwise, if $t$ is generic, the left side of this equation is 0 and the right side of this expression is -1. Further, the integral in the above expression converges and is an infinitely differentiable function of $t$ for all $\sigma > 0$.
\label{thm:limfuncsholl}
\end{theorem}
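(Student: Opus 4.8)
The plan is to follow the template of the proof of Theorem~\ref{thm:limfunc}, reducing everything to Lemma~\ref{lem:distdelt} by a shearing change of variables that turns each of the two pieces of $Q_t$ into a copy of the region $R_t$. First I would record that, since $T$ is finite, the persistence diagram $D$ of the barcode in Equation~\ref{eqn:Ddef} has only finitely many off-diagonal points $\mu_1,\dots,\mu_N$, and that by definition of the persistence surface
\begin{equation*}
F_\sigma(x,y) \;=\; \sum_{i=1}^N w_i\, g_\sigma(x,y;\mu_i), \qquad w_i = \big|(\mu_i)_y - (\mu_i)_x\big| > 0,
\end{equation*}
where $g_\sigma(\cdot\,;\mu_i)$ is the unit-mass planar Gaussian centered at $\mu_i$; the diagonal points of $D$ contribute nothing since their weight vanishes. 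As the integral is linear in $F_\sigma$, it suffices to evaluate $\lim_{\sigma\to0}\int_{Q_t}\frac{g_\sigma(x,y;\mu)}{|x-y|}\,dx\,dy$ for a single Gaussian and then sum.

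Next I would split $Q_t = Q_t^{+}\sqcup Q_t^{-}$ into its two components $Q_t^{+}=\{x<t,\ y>t\}$ and $Q_t^{-}=\{x>t,\ y<t\}$, on which $|x-y|$ equals $y-x$ and $x-y$ respectively. On $Q_t^{+}$ apply the unit-Jacobian shear $(x,y)\mapsto(x',y')=(x,\,y-x)$. It carries $Q_t^{+}$ exactly onto $R_t=\{x'<t,\ y'>t-x'\}$, carries $\frac1{|x-y|}$ to $\frac1{y'}$, and carries $g_\sigma(x,y;\mu)$ to a sheared Gaussian centered at $(\mu_x,\,\mu_y-\mu_x)$. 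A sheared Gaussian is positive, bounded for fixed $\sigma$, of unit mass, smooth with bounded and line-integrable derivatives, and concentrates at its center as $\sigma\to0$, so it satisfies the five hypotheses of Lemma~\ref{lem:distdelt}. The lemma then gives that $\int_{Q_t^{+}}\frac{g_\sigma(x,y;\mu)}{|x-y|}\,dx\,dy$ tends to $(\mu_y-\mu_x)^{-1}$ when the sheared center lies in the interior of $R_t$ --- equivalently when $\mu\in Q_t^{+}$ --- and to $0$ otherwise; genericity of $t$ ensures no $\mu_i$ lands on $\partial R_t$. Since the reflection $(x,y)\mapsto(y,x)$ interchanges $Q_t^{+}$ and $Q_t^{-}$ and preserves $|x-y|$, the identical argument handles $Q_t^{-}$ and yields the limit $|\mu_y-\mu_x|^{-1}$ when $\mu\in Q_t^{-}$ and $0$ otherwise.

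Summing over $i$ and using that $Q_t^{+}$ and $Q_t^{-}$ are disjoint, each off-diagonal point contributes $w_i\cdot w_i^{-1}=1$ if it lies in $Q_t$ and $0$ otherwise, so $\lim_{\sigma\to0}\int_{Q_t}\frac{F_\sigma(x,y)}{|x-y|}\,dx\,dy=|D\cap Q_t|$ (no diagonal point of $D$ lies in $Q_t$). Combining this with the formula $s(t)=|D\cap Q_t|-1$ of Li et al.~\cite{li2017metrics} recalled above gives the asserted identity for generic $t\in(0,\max(f))$; for generic $t$ outside that range one checks that every off-diagonal point of $D$ has both coordinates in $[0,\max(f)]$, so $D\cap Q_t=\varnothing$, the limit is $0$, and the right-hand side is $-1$, exactly as stated. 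Finally, convergence of the integral and its infinite differentiability in $t$ for each fixed $\sigma$ pass termwise from Lemma~\ref{lem:distdelt}: after the shears, the $t$-dependence of each summand is precisely that of $\int_{R_t}\frac{g'_\sigma}{y'}\,dx'\,dy'$, and a finite sum of $C^\infty$ functions is $C^\infty$.

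The main obstacle I anticipate is purely cosmetic bookkeeping: Lemma~\ref{lem:distdelt} is phrased with axis-aligned boxes $B_\delta(\mu)$, whereas such a box for the sheared Gaussian pulls back to a parallelogram in the original coordinates. One resolves this either by observing that the proof of Lemma~\ref{lem:distdelt} uses the box only as a fixed neighborhood of the center on whose complement the density tends uniformly to $0$ and on which the mass tends to $1$ --- both true for any planar Gaussian, sheared or not --- or by simply inscribing an honest box in, and circumscribing one about, the parallelogram and squeezing. The genuinely analytic points, namely the local integrability of $\frac1{|x-y|}$ against a Gaussian near the corner $(t,t)$ of $Q_t$ and the legitimacy of differentiating under the integral sign, require no new work: under the shear the corner $(t,t)$ maps to the small-$y'$ triangle that Lemma~\ref{lem:distdelt} already handles.
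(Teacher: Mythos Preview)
Your proposal is correct and follows essentially the same route as the paper: write $F_\sigma$ as a weighted sum of Gaussians, split $Q_t=Q_t^{+}\sqcup Q_t^{-}$, shear $Q_t^{+}$ onto $R_t$ via $(x,y)\mapsto(x,y-x)$ and invoke Lemma~\ref{lem:distdelt} on the resulting (still Gaussian) densities, then handle $Q_t^{-}$ symmetrically and combine with Li et al.'s counting formula $s(t)=|D\cap Q_t|-1$. The only cosmetic difference is that the paper treats $Q_t^{-}$ directly with the shear $(x,y)\mapsto(y,x-y)$ rather than first reflecting across the diagonal, and it simply remarks that linear images of Gaussian densities are Gaussian (hence satisfy the lemma's hypotheses with axis-aligned boxes as stated), which disposes of the ``parallelogram'' issue you flagged without further argument.
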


\begin{proof}
Let $\mu_1,\ldots,\mu_N$ be the coordinates of each point in $D$, with $x$ and $y$ values $(\mu_i)_x$ and~$(\mu_i)_y$ respectively. By definition the persistence surface $F_\sigma$ is given by a weighted sum of Gaussians:
\begin{equation*}
F_\sigma(x,y) = \sum_{i=1}^N \big{|}(\mu_i)_y -(\mu_i)_x\big{|}\, g_\sigma(x,y;\mu_i).
\label{eqn:fdefli}
\end{equation*}
We denote by $Q_t^+$ the portion of $Q_t$ satisfying $y > x$ and similarly denote by $Q_t^-$ the portion of $Q_t$ satisfying $x > y$. We have the equation
\begin{equation*}
 \int_{Q_t}\frac{F_\sigma(x,y)}{|x-y|}\,dx\,dy =  \int_{Q_t^+}\frac{F_\sigma(x,y)}{y-x}\,dx\,dy +  \int_{Q_t^-}\frac{F_\sigma(x,y)}{x-y}\,dx\,dy.
\end{equation*}
Examining the first term, we have
\begin{equation*}
\int_{Q_t^+}\frac{F_\sigma(x,y)}{y-x}\,dx\,dy =\int_0^\infty  \int_{-\infty}^t\frac{F_\sigma(x,y)}{y-x}\,dx\,dy = \int_0^\infty \int_{t-v}^{t} \frac{F_\sigma(u,u+v)}{v}\,du\,dv
\end{equation*}
 via the transform $(x,y)\mapsto(x,y-x):= (u,v)$. Noting that linear transformations of Gaussian density functions are still Gaussian density functions and that this last integral is over the region $R_t$ in the $uv$ plane, we have:
\begin{equation*}
\int_0^\infty \int_{t-v}^{t} \frac{F_\sigma(u,u+v)}{v}\,du\,dv = \sum_{i=1}^N \big{|}(\mu_i)_y -(\mu_i)_x\big{|}\int_{R_t}\frac{g_\sigma(u,u+v;\mu_i)}{v}\,du\,dv.
\end{equation*}
We may now apply Lemma \ref{lem:distdelt} as each $g_\sigma$ is Gaussian with mean $((\mu_i)_x,(\mu_i)_y - (\mu_i)_x)$ in the $uv$ plane and therefore upholds the requirements of the lemma. Thus the integrals on the right are well defined and infinitely differentiable. Consider the $i^\mathrm{th}$ of these integrals on the right. Taking the limit $\sigma \to 0$ this integral evaluates to $((\mu_i)_y - (\mu_i)_x)$ if $((\mu_i)_x,(\mu_i)_y - (\mu_i)_x)$ is interior to $R_t$ and 0 it is exterior to $R_t$. Equivalently, the integral evaluates to $((\mu_i)_y - (\mu_i)_x)$ if $\mu_i$ is interior to $Q_t^+$ and 0 if it is exterior to $Q_t^+$.

No $\mu_i$ lies on the boundary of $Q_t^+$ when $t$ is generic. Hence, the limit
\begin{equation*}
\lim_{\sigma \to 0} \int_{Q_t^+}\frac{F_\sigma(x,y)}{|x-y|}\,dx\,dy
\end{equation*}
evaluates to the number of $\mu_i$ in $Q_t^+$ for generic $t$.

Similarly, we may show via the transformation $(x,y) \mapsto (y, x-y)$ that the limit
\begin{equation*}
\lim_{\sigma \to 0} \int_{Q_t^-}\frac{F_\sigma(x,y)}{|x-y|}\,dx\,dy
\end{equation*}
is the number of $\mu_i$ in $Q_t^-$ for generic $t$, and that the integral here is always well defined and infinitely differentiable regardless of $t$. 

Hence the integral
\begin{equation*}
\int_{Q_t}\frac{F_\sigma(x,y)}{|x-y|}\,dx\,dy =\int_{Q_t^+}\frac{F_\sigma(x,y)}{|x-y|}\,dx\,dy + \int_{Q_t^-}\frac{F_\sigma(x,y)}{|x-y|}\,dx\,dy
\end{equation*}
is well defined and infinitely differentiable for any $\sigma$.

Now consider generic $t$. We have 
\begin{equation*}
\lim_{\sigma \to 0} \int_{Q_t}\frac{F_\sigma(x,y)}{|x-y|}\,dx\,dy-1 =\lim_{\sigma \to 0} \int_{Q_t^+}\frac{F_\sigma(x,y)}{|x-y|}\,dx\,dy + \lim_{\sigma \to 0} \int_{Q_t^-}\frac{F_\sigma(x,y)}{|x-y|}\,dx\,dy - 1,
\end{equation*}
which is the number of points in the barcode that also lie in $Q_t$ minus one.
This value has been shown to be $s(t)$ in \cite{li2017metrics} when $t$ is between 0 and $\max(f)$. When $t$ is not between 0 and $\max(f)$, there can be no points in $Q_t^+$ or $Q_t^-$, and so this expression evaluates to -1. Clearly for such $t$, $s(t) = 0$.

\end{proof}

\end{document}